\newtheorem{lem}{Lemma}
\newtheorem{cor}{Corollary}
\newtheorem{thm}{Theorem}
\newtheorem{defn}{Definition}
\def\sgn{\mathop{\rm sgn}}
\newcommand{\com}[1]{\textbf{\color{red} (COMMENT: #1)}} 
\newcommand{\comg}[1]{\textbf{\color{green} (COMMENT: #1)}}
\newcommand{\response}[1]{\textbf{\color{magenta} (RESPONSE: #1)}} 
\newcommand{\com}[1]{}
\newcommand{\comg}[1]{}
\newcommand{\response}[1]{}
\begin{document}

\title{Database-assisted Distributed Spectrum Sharing}

\author{Xu Chen and Jianwei Huang \\Department of Information Engineering, The Chinese University of Hong Kong \\Email:\{cx008,jwhuang\}@ie.cuhk.edu.hk \thanks{*Part of the results has been published at IEEE ICDCS 2012 \cite{ICDCS2012}.}\vspace{-0.8cm}}

\maketitle
\pagestyle{empty}
\thispagestyle{empty}
\allowdisplaybreaks

\begin{abstract}
According to FCC's ruling for white-space spectrum access, white-space
devices are required to query a database to determine the spectrum
availability. In this paper, we study the database-assisted distributed white-space access point (AP) network design.
We first model the cooperative and non-cooperative channel selection problems among the APs as the system-wide throughput optimization and non-cooperative
AP channel selection games, respectively, and design distributed AP channel selection
algorithms that achieve system optimal point and Nash equilibrium, respectively. We then propose a state-based
game formulation for the distributed AP association problem
of the secondary users by taking the cost of mobility into account.
We show that the state-based distributed AP association game has the
finite improvement property, and design a distributed AP association
algorithm that can converge to a state-based Nash equilibrium. Numerical
results show that the algorithm is robust to the perturbation by secondary
users' dynamical leaving and entering the system.
\end{abstract} 
\section{Introduction}

The most recent FCC ruling requires that TV white-space devices must
rely on a geo-location database to determine the spectrum availability
\cite{FCC}. In such a database-assisted architecture, the incumbents (primary licensed holders of TV spectrum)
provide the database with the up-to-date information including TV tower transmission
parameters and TV receiver protection requirements. Based on this information, the database will be able to tell a white-space device (secondary users (SUs) of TV spectrum) vacant TV channels at a particular location, given the white-space device's transmission parameters such as the transmission power.

Although the database-assisted approach obviates the need of spectrum
sensing, the task of developing a comprehensive and reliable database-assisted white-space network
system remains challenging \cite{db4}. Motivated by the
successful deployments of Wi-Fi over the unlicensed ISM bands, in
this paper we consider an infrastructure-based white-space network (see Figure
\ref{fig:Distributed-Spectrum-Sharing} for an illustration),
where there are multiple secondary access points (APs) operating on white spaces. Such an infrastructure-based architecture has been adopted in IEEE 802.22 standard \cite{IEEE} and Microsoft Redmond campus white-space networking experiment \cite{db4}. More specifically, each
AP first sends the required information such as its location and the transmission power to the database via wire-line connections. The database
then feeds back the set of vacant TV channels at the location of each AP. Afterwards, an AP chooses one feasible channel to serve the secondary users (i.e., unlicensed white-space user devices) within its transmission range.

The key challenges for such an infrastructure-based white-space network
design are twofold (see Figure \ref{fig:System-Architecture} for
an illustration). First, in the AP tier, each AP must choose a proper vacant channel
to operate in order to avoid severe interference with other APs. Second, in the SU tier, when an AP is overloaded, a secondary
user can improve its throughput by moving to and associating with
another AP with less contending users. Each secondary user hence needs
to decide which AP to associate with.

\begin{figure}[tt]
\centering
\includegraphics[scale=0.55]{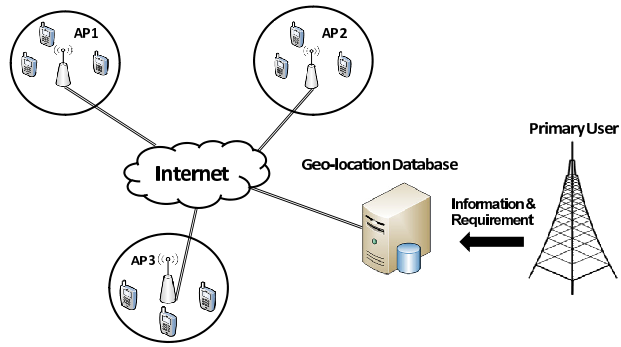}
\caption{\label{fig:Distributed-Spectrum-Sharing}Distributed spectrum sharing
with geo-location database}
\end{figure}

\begin{figure}[tt]
\centering
\includegraphics[scale=0.65]{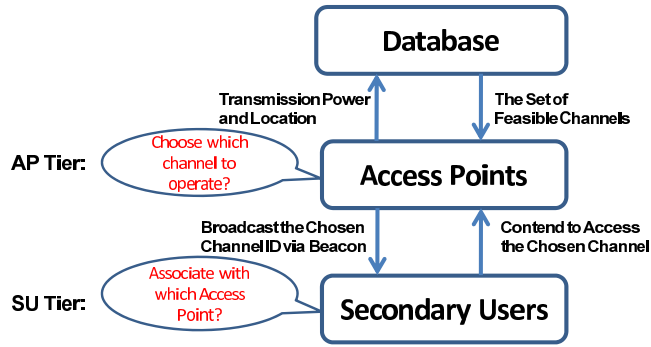}
\caption{\label{fig:System-Architecture}System architecture}
\end{figure}

\begin{table}
\caption{\label{tab:Summary-of-the}Summary of the results}
\small
\centering
\begin{tabular}{|c|c|c|c|}
\hline
 & Problem & Cooperative AP Channel Selection & Non-Cooperative AP Channel Selection\tabularnewline
\cline{2-4}
AP Tier & Formulation & System-wide Throughput Optimization & Non-Cooperative AP Channel Selection Game\tabularnewline
\cline{2-4}
 & Algorithm & Cooperative AP Channel Selection Algorithm & Non-Cooperative AP Channel Selection Algorithm\tabularnewline
\hline
 & Problem & \multicolumn{2}{c|}{Distributed AP Association by SUs}\tabularnewline
\cline{2-4}
SU Tier & Formulation & \multicolumn{2}{c|}{Distributed AP Association Game}\tabularnewline
\cline{2-4}
 & Algorithm & \multicolumn{2}{c|}{Distributed AP Association Algorithm}\tabularnewline
\hline
\end{tabular}
\end{table}

In this paper, for the AP tier, we first consider the scenario that all the APs are owned by one network operator and hence the APs are cooperative. We formulate the cooperative AP channel selection problem as the system-wide throughput optimization problem. We then consider the scenario that the APs are owned by different network operators and the interest of APs is not aligned. We model the distributed channel selection problem among the APs as a non-cooperative AP channel selection game. For the SU tier, we propose a state-based game framework to model the distributed AP association
problem of the secondary users by taking the cost of mobility into
account. The main results and contributions of this paper are as follows (please refer to Table \ref{tab:Summary-of-the} for a summary):
\begin{itemize}
\item \emph{General formulation}: We formulate the cooperative and non-cooperative channel selection
problems among the APs as system-wide throughput optimization and non-cooperative AP channel selection game, respectively, based
on the physical interference model \cite{Gupta00thecapacity}. We then propose a
state-based game framework to formulate the distributed AP association
problem of the secondary users and explicitly take the cost of
mobility into account.
\item \emph{Existence of equilibrium solution and finite improvement property}: For the cooperative AP channel selection problem, the interest of APs is aligned and the system optimal solution that maximizes system-wide throughput always exists. For the non-cooperative AP channel selection game, we show that it is
a potential game, and hence it has a Nash equilibrium and the finite
improvement property. For the state-based distributed AP association
game, we show that it also has a state-based Nash equilibrium and
the finite improvement property.
\item \emph{Distributed algorithms for achieving equilibrium}: For the cooperative AP channel selection problem, we propose a cooperative channel selection algorithm that maximizes the system-wide throughput. For the non-cooperative AP channel selection game, we propose a non-cooperative
AP channel selection algorithm that achieves a Nash equilibrium of
the game. For the state-based distributed AP association game, we
design a distributed AP association algorithm that converges to a
state-based Nash equilibrium. Numerical results show that the algorithm
is robust to the perturbation by secondary users' dynamical leaving
and entering the system.
\end{itemize}

The rest of the paper is organized as follows. We introduce the cooperative and non-cooperative
AP channel selection problems, and propose the cooperative and non-cooperative AP channel
selection algorithms in Sections \ref{sec:Distributed-AP-Channel1}
and \ref{sec:Distributed-AP-Channel}, respectively. We present
the distributed AP association game and distributed AP association algorithm in Section \ref{sec:Distributed-AP-Association-1}.
We illustrate the performance
of the proposed mechanisms through numerical results in Section \ref{sec:Simulation-Results},
and finally introduce the related work and conclude in Sections \ref{sec:Related-Work}
and \ref{sec:Conclusion}, respectively.

\section{Cooperative AP Channel Selection}\label{sec:Distributed-AP-Channel1}

\subsection{System Model}

We first introduce the system model for the cooperative channel selection problem among
the APs in the AP tier. Let $\mathcal{M}=\{1,2,...,M\}$ denote the set of TV
channels, and $B$ denote the bandwidth of each channel (e.g., $B=6$
MHz in the United States and $B=8$ MHz in the European Union). We consider a set $\mathcal{N}=\{1,2,...,N\}$
of APs that operate on the white spaces. Each AP $n\in\mathcal{N}$
has a specified transmission power $P_{n}$ based on its coverage and primary user protection requirements.

Each AP $n$ can acquire the information of the vacant channels
at its location from the geo-location database. We denote $\mathcal{M}_{n}\subseteq\mathcal{M}$
as the set of feasible channels of AP $n$, $a_{n}\in\mathcal{M}_{n}$
as the channel chosen by AP $n$\footnote{Following the conventions in IEEE 802.22 standard \cite{IEEE} and Microsoft Redmond campus white-space networking experiment \cite{db4}, we consider the case that each AP can select one channel to operate on. The case that each AP can select multiple channels to operate on will be considered in a future work.}, and $\boldsymbol{a}=(a_{1},...,a_{N})$ as the
channel selection profile of all APs. Then the worse-case down-link
throughput (i.e., the throughput at the boundary of the coverage area) of AP $n$ can be computed according to the physical interference
model \cite{Gupta00thecapacity} as\begin{equation}
U_{n}(\boldsymbol{a})=B\log_{2}\left(1+\frac{P_{n}/d_{n}^{\theta}}{\omega_{a_{n}}^{n}+\sum_{i\in\mathcal{N}/\{n\}:a_{i}=a_{n}}P_{i}/d_{in}^{\theta}}\right),\label{eq:U}\end{equation}
where $\theta$ is the path loss factor, $d_{n}$ denotes the radius of the coverage area of AP $n$, and $d_{in}$ denotes
the distance between AP $i$ and the benchmark location at the boundary of the coverage area of AP $n$. Furthermore, $\omega_{a_{n}}^{n}$
denotes the background noise power including the interference from incumbent users
on the channel $a_{n}$, and $\sum_{i\in\mathcal{N}/\{n\}:a_{i}=a_{n}}P_{i}/d_{in}^{\theta}$
denotes the accumulated interference from other APs that choose the
same channel $a_{n}$. Note that we assume that all APs only try to maximize the worse-case throughputs by proper channel selections, which do not depend on the number of its associated users. However, the secondary users can increase their data rates by moving to and associating with a less congested AP (see Section \ref{sec:Distributed-AP-Association-1} for detailed discussions). Note that our model also applies to the up-link case if the secondary users within an AP transmit with roughly the same power level.

\subsection{Cooperative AP Channel Selection Algorithm}

We first consider the case that all the APs try to maximize the system-wide throughput cooperatively. Such a cooperation is feasible when all the APs are owned by the same network
operator. For example, the APs that are deployed in a university campus can coordinate to maximize the entire campus network throughput. Formally, the APs need to collectively determine the optimal channel selection profile $\boldsymbol{a}$ such that the system-wide throughput is maximized, i.e., \begin{align}\max_{\boldsymbol{a}\in\Theta\triangleq\Pi_{n=1}^{N}\mathcal{M}_{n}}\sum_{n=1}^{N}U_{n}(\boldsymbol{a}).\label{cooperS} \end{align}
The problem (\ref{cooperS}) is a combinatorial optimization problem of finding the optimal channel selection profile over the discrete solution space $\Theta$. In general, such a problem is very challenging to solve exactly especially when the size of network is large (i.e., the solution space $\Theta$ is large).

We next propose a cooperative channel selection algorithm that can approach
the optimal system-wide throughput approximatively. To proceed, we first write the problem (\ref{cooperS}) into the following equivalent problem:
\begin{equation}
\max_{(q_{\boldsymbol{a}}:\boldsymbol{a}\in\Theta)}\sum_{\boldsymbol{a}\in\Theta}q_{\boldsymbol{a}}\sum_{n=1}^{N}U_{n}(\boldsymbol{a}),\label{eq:P12}\end{equation}
where $q_{\boldsymbol{a}}$ is the probability that channel selection profile $\boldsymbol{a}$ is adopted. Obviously, the optimal solution to problem (\ref{eq:P12}) is to choose the optimal channel selection profile with probability one. It is known from \cite{chen2010markov22} that problem (\ref{eq:P12}) can be approximated by the following convex optimization problem:  \begin{equation}
\max_{(q_{\boldsymbol{a}}:\boldsymbol{a}\in\Theta)}\sum_{\boldsymbol{a}\in\Theta}q_{\boldsymbol{a}}\sum_{n=1}^{N}U_{n}(\boldsymbol{a})-\frac{1}{\gamma}\sum_{\boldsymbol{a}\in\Theta}q_{\boldsymbol{a}}\log q_{\boldsymbol{a}},\label{eq:P1}\end{equation}
where $\gamma$ is the parameter that controls the approximation ratio. We see that when $\gamma\rightarrow\infty$, the problem
(\ref{eq:P1}) becomes exactly the same as problem (\ref{eq:P12}).
That is, when $\gamma\rightarrow\infty,$ the optimal point
$\boldsymbol{a}^{*}$ that maximizes the system throughput $\sum_{n=1}^{N}U_{n}(\boldsymbol{a})$
will be selected with probability one. A nice property of such an approximation in (\ref{eq:P1}) is that we can obtain the close-form solution, which enables the distributed algorithm design later. More specifically, by the KKT condition \cite{CO}, we can derive the optimal solution to problem (\ref{eq:P1}) as\begin{equation}
q_{\boldsymbol{a}}^{*}=\frac{\exp\left(\gamma\sum_{n=1}^{N}U_{n}(\boldsymbol{a})\right)}{\sum_{\boldsymbol{a}^{'}\in\Theta}\exp\left(\gamma\sum_{n=1}^{N}U_{n}(\boldsymbol{a}^{'})\right)}.\label{eq:P3}\end{equation}

\begin{figure}[tt]
\centering
\includegraphics[scale=0.8]{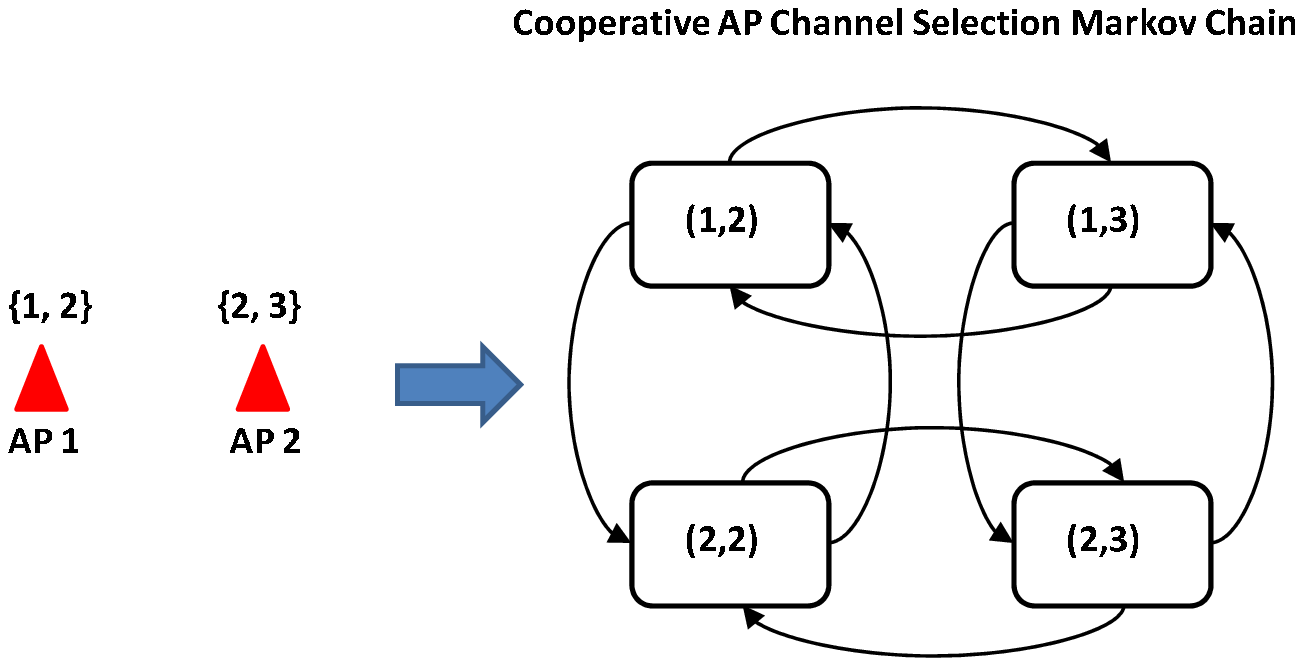}
\caption{\label{fig:APMarkov}System state transition diagram of the cooperative AP channel selection Markov chain by two APs. Figure on the left hand-side details the vacant channels of two APs. For example, AP $1$ can choose channels $1$ and $2$ to transmit. Figure on the right hand-side shows the transition diagram of the Markov chain, and $(a_{1},a_{2})$ denotes the system state with $a_{1}$ and $a_{2}$ being the channels chosen by APs $1$ and $2$, respectively. The direct transition between two system states is feasible if they are connected by a link.}
\end{figure}

\begin{algorithm}[tt]
\begin{algorithmic}[1]
\State \textbf{initialization:}
\State \hspace{0.4cm} \textbf{choose} an initial channel $a_{n}\in\mathcal{M}_{n}$ randomly for each AP $n\in\mathcal{N}$.
\State \hspace{0.4cm} \textbf{acquire} the information of initial channel selections, transmission powers, and geo-locations from other APs by each AP $n\in\mathcal{N}$.
\State \textbf{end initialization\newline}

\Loop{ for each iteration:}
        \State Database \textbf{selects} an AP randomly and \textbf{informs} the selected AP to update its channel selection.
        \For{each AP $n\in\mathcal{N}$ in parallel}
            \If{ the update command is received from the database}
                \State \textbf{calculate} the system throughput $\sum_{n=1}^{N}U_{n}(a_{n},a_{-n})$ for each feasible channel selection $a_{n}\in\mathcal{M}_{n}$.
                \State \textbf{select} a channel $a_{n}\in\mathcal{M}_{n}$ with a probability of $\frac{\exp\left(\gamma \sum_{n=1}^{N}U_{n}(a_{n},a_{-n})\right)}{\sum_{a^{'}\in\mathcal{M}_{n}}\exp\left(\gamma \sum_{n=1}^{N}U_{n}(a^{'},a_{-n})\right)}.$
                \State \textbf{broadcast} the chosen channel $a_{n}$ to other APs.
            \Else{ \textbf{select} the original channel.}
            \EndIf
        \EndFor
\EndLoop
\end{algorithmic}
\caption{\label{alg:Online-Distributed-Channel-2} Cooperative AP Channel Selection
Algorithm}
\end{algorithm}

Similarly to the spatial adaptive play in \cite{young2001individual} and Gibbs sampling in \cite{kauffmann2007measurement}, we then design a cooperative AP channel selection algorithm by
carefully coordinating APs' asynchronous channel selection
updates to form a discrete-time Markov chain (with the system state as the channel selection profile $\boldsymbol{a}$ of all APs). As long as the Markov chain converges to the stationary distribution as given in (\ref{eq:P3}), we can approach the optimal channel selection profile that maximizes the system-wide throughput by setting a large enough parameter $\gamma$. The details of the
algorithm are given in Algorithm \ref{alg:Online-Distributed-Channel-2}.
Here APs' asynchronous channel selection updates are scheduled by
the database. In each iteration, one AP will be randomly chosen to
update its channel selection. In this case, the direct transitions between two system states $\boldsymbol{a}$ and $\boldsymbol{a}^{'}$  are feasible if these two system states differ by one and only one AP channel selection. As an example, the system state transition diagram of the cooperative AP channel selection Markov chain by two APs is shown in Figure \ref{fig:APMarkov}. We also denote the set of system states that can be transited directly from the state $\boldsymbol{a}$ as $\Lambda_{\boldsymbol{a}}\triangleq\{\boldsymbol{a}^{'}\in\Theta:|\{\boldsymbol{a}\cup\boldsymbol{a}^{'}\} / \{\boldsymbol{a}\cap\boldsymbol{a}^{'}\}|=2 \}$, where $|\cdot|$ denotes the size of a set.

Since each AP will be selected to update with a probability of $\frac{1}{N}$ and the selected AP will randomly choose a channel with
a probability proportional to $\exp\left(\gamma\sum_{n=1}^{N}U_{n}(\boldsymbol{a})\right)$, then if $\boldsymbol{a}^{'}\in\Lambda_{\boldsymbol{a}}$, the probability that the Markov chain transits from state $\boldsymbol{a}$
to $\boldsymbol{a}^{'}$ is given as \begin{align}
q_{\boldsymbol{a},\boldsymbol{a}^{'}}=\frac{1}{N}\frac{\exp\left(\gamma\sum_{n=1}^{N}U_{n}(a_{n}^{'},a_{-n})\right)}{\sum_{a^{'}\in\mathcal{M}_{n}}\exp\left(\gamma\sum_{n=1}^{N}U_{n}(a^{'},a_{-n})\right)}.\label{pp22}\end{align}
Otherwise, we have $q_{\boldsymbol{a},\boldsymbol{a}^{'}}=0$. We show in Theorem \ref{cooperAP} that the cooperative AP channel selection Markov chain
is time reversible. Time reversibility means that when tracing the Markov chain backwards, the stochastic
behavior of the reverse Markov chain remains the same. A nice property of a time reversible Markov
chain is that it always admits a unique stationary distribution, which guarantees the convergence of the
cooperative AP channel selection algorithm.

\begin{thm} \label{cooperAP}
The cooperative AP channel selection algorithm induces a time-reversible Markov chain with the unique stationary distribution as given in (\ref{eq:P3}).
\end{thm}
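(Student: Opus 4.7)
The plan is to establish time-reversibility by verifying the detailed balance equations with respect to the claimed stationary distribution $q_{\boldsymbol{a}}^{*}$ in (\ref{eq:P3}), and then invoke the standard fact that a time-reversible, irreducible, aperiodic chain has a unique stationary distribution satisfying those equations. Concretely, I would show that for every pair of states $\boldsymbol{a},\boldsymbol{a}'\in\Theta$,
\begin{equation*}
q_{\boldsymbol{a}}^{*}\,q_{\boldsymbol{a},\boldsymbol{a}'}=q_{\boldsymbol{a}'}^{*}\,q_{\boldsymbol{a}',\boldsymbol{a}}.
\end{equation*}

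First I would dispose of the trivial case: if $\boldsymbol{a}'\notin\Lambda_{\boldsymbol{a}}$, i.e.\ the two profiles differ in at least two coordinates, then by construction of Algorithm \ref{alg:Online-Distributed-Channel-2} (which updates exactly one AP per iteration) both $q_{\boldsymbol{a},\boldsymbol{a}'}$ and $q_{\boldsymbol{a}',\boldsymbol{a}}$ are $0$, so detailed balance holds automatically. The substantive case is $\boldsymbol{a}'\in\Lambda_{\boldsymbol{a}}$: there is a unique AP $n$ with $a_n\neq a_n'$ and $a_{-n}=a_{-n}'$. Plugging the transition probability (\ref{pp22}) and the stationary formula (\ref{eq:P3}) into both sides, the key observation is that since $a_{-n}=a_{-n}'$, the normalizing denominators $\sum_{a''\in\mathcal{M}_n}\exp\!\big(\gamma\sum_{k}U_{k}(a'',a_{-n})\big)$ appearing in $q_{\boldsymbol{a},\boldsymbol{a}'}$ and $q_{\boldsymbol{a}',\boldsymbol{a}}$ coincide. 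Likewise the partition function in $q_{\boldsymbol{a}}^{*}$ and $q_{\boldsymbol{a}'}^{*}$ is the same constant over $\Theta$. Both sides then reduce to a common expression proportional to $\exp\!\big(\gamma\sum_{k}U_{k}(\boldsymbol{a})\big)\exp\!\big(\gamma\sum_{k}U_{k}(\boldsymbol{a}')\big)$, establishing detailed balance.

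To conclude uniqueness and convergence to $q_{\boldsymbol{a}}^{*}$, I would then note that the chain is irreducible on $\Theta$: any state can be reached from any other by sequentially letting each AP that differs take its turn (each such single-AP update has strictly positive probability under (\ref{pp22}) because the exponential weights are positive). Aperiodicity follows since for any state $\boldsymbol{a}$ the selected AP may re-pick its current channel with positive probability, giving a self-loop $q_{\boldsymbol{a},\boldsymbol{a}}>0$. Detailed balance together with irreducibility and aperiodicity then imply that $q_{\boldsymbol{a}}^{*}$ is the unique stationary distribution, and that the chain is time-reversible.

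The only delicate step is the algebraic verification of detailed balance in the one-coordinate-difference case; everything else is a direct appeal to standard Markov chain theory. The main thing to be careful about is that I correctly exploit the symmetry $a_{-n}=a_{-n}'$ so that the two per-step denominators are literally the same object (not merely structurally similar), since this is what makes the exponentials in $q_{\boldsymbol{a}}^{*}$ and $q_{\boldsymbol{a}'}^{*}$ cancel cleanly against those in the transition probabilities.
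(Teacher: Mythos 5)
Your proposal is correct and follows essentially the same route as the paper: verifying the detailed balance equations for $q_{\boldsymbol{a}}^{*}$ by splitting into the cases $\boldsymbol{a}'\notin\Lambda_{\boldsymbol{a}}$ (both transition probabilities vanish) and $\boldsymbol{a}'\in\Lambda_{\boldsymbol{a}}$ (the shared denominator from $a_{-n}=a_{-n}'$ makes both sides equal), combined with irreducibility of the chain on $\Theta$. Your explicit note on aperiodicity via the self-loop $q_{\boldsymbol{a},\boldsymbol{a}}>0$ is a small extra touch of rigor the paper omits, but the argument is the same.
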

\begin{proof}
As mentioned, the system state of the cooperative AP channel selection
Markov chain is defined as the channel selection profile $\boldsymbol{a}\in\Theta$
of all APs. Since it is possible to get from any state to any other
state within finite steps of transition,
the AP channel selection Markov chain is hence irreducible and has a stationary distribution.

We then show that the Markov chain is time reversible by showing that the distribution in (\ref{eq:P3}) satisfies the following
detailed balance equations:\begin{equation}
q_{\boldsymbol{a}}^{*}q_{\boldsymbol{a},\boldsymbol{a}^{'}}=q_{\boldsymbol{a}^{'}}^{*}q_{\boldsymbol{a}^{'},\boldsymbol{a}}, \forall \boldsymbol{a}, \boldsymbol{a}^{'}\in\Theta.\label{eq:dddd}\end{equation}
To see this, we consider the following two cases:

1) If $\boldsymbol{a}^{'}\notin\Lambda_{\boldsymbol{a}}$, we have $q_{\boldsymbol{a},\boldsymbol{a}^{'}}=q_{\boldsymbol{a},\boldsymbol{a}^{'}}=0$ and the equation (\ref{eq:dddd}) holds.

2) If $\boldsymbol{a}^{'}\in\Lambda_{\boldsymbol{a}}$, according to (\ref{eq:P3}) and (\ref{pp22}), we have\begin{align*}
q_{\boldsymbol{a}}^{*}q_{\boldsymbol{a},\boldsymbol{a}^{'}} & = \frac{\exp\left(\gamma\sum_{n=1}^{N}U_{n}(\boldsymbol{a})\right)}{\sum_{\tilde{\boldsymbol{a}}\in\Theta}\exp\left(\gamma\sum_{n=1}^{N}U_{n}(\tilde{\boldsymbol{a}})\right)}
\frac{1}{N}\frac{\exp\left(\gamma\sum_{n=1}^{N}U_{n}(\boldsymbol{a}^{'})\right)}{\sum_{a^{'}\in\mathcal{N}_{n}}\exp\left(\gamma\sum_{n=1}^{N}U_{n}(a^{'},a_{-n})\right)}\\
 & = \frac{\exp\left(\gamma\sum_{n=1}^{N}U_{n}(\boldsymbol{a}^{'})\right)}{\sum_{\tilde{\boldsymbol{a}}\in\Theta}\exp\left(\gamma\sum_{n=1}^{N}U_{n}(\tilde{\boldsymbol{a}})\right)}
\frac{1}{N}\frac{\exp\left(\gamma\sum_{n=1}^{N}U_{n}(\boldsymbol{a})\right)}{\sum_{a^{'}\in\mathcal{N}_{n}}\exp\left(\gamma\sum_{n=1}^{N}U_{n}(a^{'},a_{-n})\right)}\\
 & = q_{\boldsymbol{a}^{'}}^{*}q_{\boldsymbol{a}^{'},\boldsymbol{a}}.\end{align*}
The cooperative AP channel selection Markov chain is hence time-reversible
and has the unique stationary distribution as given in (\ref{eq:P3}).\end{proof}

According to Theorem \ref{cooperAP}, we can approach the system optimal point that maximizes the system-wide throughput by setting $\gamma\rightarrow\infty$ in the cooperative AP channel selection algorithm. However, in practice we can only implement a finite value
of $\gamma$  such that $\exp(\gamma\sum_{n=1}^{N}U_n(\boldsymbol{a}))$ does not exceed the range of the largest predefined real number on a computer. Let $\bar{S}=\sum_{a\in\Theta}q_{a}^{*}\sum_{n=1}^{N}U_n(\boldsymbol{a})$ be
the expected potential by Algorithm \ref{alg:Online-Distributed-Channel-2} and $S^{*}=\max_{a\in\Theta}\sum_{n=1}^{N}U_n(\boldsymbol{a})$
be the global optimal potential. We show in Theorem \ref{thm:For-the-distributed-2}
that, when a large eough $\gamma$ is adopted, the performance
gap between $\bar{S}$ and $S^{*}$ is very small.
\begin{thm}
\label{thm:For-the-distributed-2}For the cooperative AP channel selection
algorithm, we have that \[
0\leq S^{*}-\bar{S}\leq\frac{1}{\gamma}\ln|\Theta|,\]
where $|\Theta|$ denotes the number of feasilbe channel selection
profiles of all APs.\end{thm}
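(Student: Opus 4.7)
The plan is to establish the two inequalities separately, using very different ideas for each direction.

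For the lower bound $S^* - \bar{S} \ge 0$, I would simply observe that $\bar{S} = \sum_{\boldsymbol{a}\in\Theta} q_{\boldsymbol{a}}^{*} \sum_{n=1}^{N} U_n(\boldsymbol{a})$ is a convex combination of the quantities $\sum_{n=1}^{N} U_n(\boldsymbol{a})$, each of which is by definition upper bounded by $S^{*} = \max_{\boldsymbol{a}\in\Theta} \sum_{n=1}^{N} U_n(\boldsymbol{a})$. Since $\sum_{\boldsymbol{a}} q_{\boldsymbol{a}}^{*} = 1$, this immediately gives $\bar{S} \le S^{*}$.

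For the upper bound, the key idea is to exploit the fact that $q^{*}$ is the optimizer of the entropy-regularized problem (\ref{eq:P1}), and compare its objective value against that of the degenerate distribution $\tilde{q}$ that puts probability one on the system-optimal profile $\boldsymbol{a}^{*} = \arg\max_{\boldsymbol{a}\in\Theta}\sum_{n=1}^{N}U_n(\boldsymbol{a})$. For $\tilde{q}$, the regularization term $-\frac{1}{\gamma}\sum_{\boldsymbol{a}}\tilde{q}_{\boldsymbol{a}}\log \tilde{q}_{\boldsymbol{a}}$ equals zero (with the convention $0\log 0 = 0$), while the linear term equals $S^{*}$. By optimality of $q^{*}$, I get
\begin{equation*}
\bar{S} - \frac{1}{\gamma}\sum_{\boldsymbol{a}\in\Theta} q_{\boldsymbol{a}}^{*}\log q_{\boldsymbol{a}}^{*} \;\ge\; S^{*} - 0,
\end{equation*}
which rearranges to $S^{*} - \bar{S} \le -\frac{1}{\gamma}\sum_{\boldsymbol{a}\in\Theta} q_{\boldsymbol{a}}^{*}\log q_{\boldsymbol{a}}^{*}$. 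The right-hand side is exactly $\frac{1}{\gamma}$ times the Shannon entropy of $q^{*}$ over the finite set $\Theta$.

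The final step is the standard entropy bound: the entropy of any probability distribution on a finite set of cardinality $|\Theta|$ is at most $\log|\Theta|$, attained by the uniform distribution. This can be justified in one line via Jensen's inequality applied to the concave function $-x\log x$, or by nonnegativity of the KL divergence $D(q^{*}\,\|\,\text{uniform}) \ge 0$. Combining this with the previous display gives $S^{*} - \bar{S} \le \frac{1}{\gamma}\ln|\Theta|$, completing the proof. I do not anticipate a serious technical obstacle here; the only subtle point is remembering that the comparison in the optimality step uses a particular suboptimal competitor (the Dirac mass on $\boldsymbol{a}^{*}$), not a uniform one, so that the linear part of the objective is already exactly $S^{*}$ and the entropy slack appears only on the side of $q^{*}$.
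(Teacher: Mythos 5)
Your proposal is correct and follows essentially the same route as the paper: both arguments hinge on comparing the entropy-regularized objective at its optimizer $q^{*}$ against the value $S^{*}$ attained by the (zero-entropy) degenerate distribution, and then bounding the entropy of $q^{*}$ by $\ln|\Theta|$. The only cosmetic difference is that the paper phrases the middle step as ``the regularized maximum dominates the unregularized maximum because the entropy term is nonnegative,'' whereas you instantiate the suboptimal competitor explicitly; the content is identical.
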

\begin{proof}
First of all, we must have that $S^{*}\geq\bar{S}$. According to (\ref{cooperS}), (\ref{eq:P1}), and (\ref{eq:P3}), we then have that\begin{equation}
\max_{(q_{\boldsymbol{a}}:\boldsymbol{a}\in\Theta)}\sum_{a\in\Theta}q_{a}\sum_{n=1}^{n}U_{n}(\boldsymbol{a})\leq\max_{(q_{\boldsymbol{a}}:\boldsymbol{a}\in\Theta)}\sum_{a\in\Theta}q_{a}\sum_{n=1}^{n}U_{n}(\boldsymbol{a})-\frac{1}{\gamma}\sum_{a\in\Theta}q_{a}\ln q_{a},\label{eq:hhhhh1}\end{equation}
which is due to the fact that $0\leq-\frac{1}{\gamma}\sum_{a\in\Theta}q_{a}\ln q_{a}\leq\frac{1}{\gamma}\ln|\Theta|$.
Since $q_{a}^{*}$ is the optimal solution to (\ref{eq:P1}) and $S^{*}=\max_{(q_{\boldsymbol{a}}:\boldsymbol{a}\in\Theta)}\sum_{a\in\Theta}q_{a}\sum_{n=1}^{n}U_{n}(\boldsymbol{a})$,
according to (\ref{eq:hhhhh1}), we know that\begin{eqnarray*}
S^{*} & \leq & \sum_{a\in\Theta}q_{a}^{*}\sum_{n=1}^{n}U_{n}(\boldsymbol{a})-\frac{1}{\gamma}\sum_{a\in\Theta}q_{a}^{*}\ln q_{a}^{*}\\
 & \leq & \sum_{a\in\Theta}q_{a}^{*}\sum_{n=1}^{n}U_{n}(\boldsymbol{a})+\frac{1}{\gamma}\ln|\Theta|\\
 & \leq & \bar{S}+\frac{1}{\gamma}\ln|\Theta|,\end{eqnarray*}
which completes the proof.
\end{proof}

We then analyze the computational complexity of the algorithm. In each iteration, one AP will be chosen for the channel selection update. Line $9$ involves the summation of the throughputs of $N$ APs for $\mathcal{M}_{n}$ channels. Since $|\mathcal{M}_n|\leq M$, this step has the complexity of $\mathcal{O}(NM)$. Line $10$ involves at most $M$ summation and division operations and hence has a complexity of $\mathcal{O}(M)$. Line $11$ has a complexity of $\mathcal{M}(1)$. Suppose that it takes $C$ iterations for the algorithm to converge. Then total computational complexity of the algorithm is $\mathcal{O}(CNM)$. Similarly, the space complexity of the algorithm is $\mathcal{O}(N^2+NM)$.

\section{\label{sec:Distributed-AP-Channel}Non-cooperative AP Channel Selection}

We next consider the case that the APs are owned by different network operators. Unlike the previous case where the interest of the APs is aligned in the cooperative channel selection, here each AP is generally selfish and only concerns about its own throughput maximization. Formally, given other APs'
channel selections $a_{-n}$, the problem faced by an AP $n$ is to choose a proper channel to maximize its own throughput, i.e.,\[
\max_{a_{n}\in\mathcal{M}_{n}}U_{n}(a_{n},a_{-n}),\forall n\in\mathcal{N}.\]
The non-cooperative nature of the channel selection problem naturally
leads to a formulation based on game theory, such that each AP
can self organize into a mutually acceptable channel selection (\textbf{Nash
equilibrium}) $\boldsymbol{a}^{*}=(a_{1}^{*},a_{2}^{*},...,a_{N}^{*})$ with\[
a_{n}^{*}=\arg\max_{a_{n}\in\mathcal{M}_{n}}U_{n}(a_{n},a_{-n}^{*}),\forall n\in\mathcal{N}.\]

\subsection{Non-Cooperative AP Channel Selection Game}

We now formulate the non-cooperative channel selection problem as a strategic
game $\\ \Gamma=(\mathcal{N},\{\mathcal{M}_{n}\}_{n\in\mathcal{N}},\{U_{n}\}_{n\in\mathcal{N}})$,
where $\mathcal{N}$ is the set of APs, $\mathcal{M}_{n}$ is the
set of strategies for AP $n$, and $U_{n}$ is the payoff function
of AP $n$. We refer this as the non-cooperative AP channel selection game
in the sequel.

We can show that it is a potential game, which is defined as
\begin{defn}[\textbf{Potential Game} \!\!\cite{Monderer1996}]
A game is called a potential game if it admits a potential function
$\Phi(\boldsymbol{a})$ such that for every $n\in\mathcal{N}$ and $a_{-n}\in\prod_{i\ne n}\mathcal{M}_{i}$,\begin{eqnarray*}
\sgn\left(\Phi(a_{n}^{'},a_{-n})-\Phi(a_{n},a_{-n})\right)=\sgn\left(U_{n}(a_{n}^{'},a_{-n})-U_{n}(a_{n},a_{-n})\right),\end{eqnarray*}
where $\sgn(\cdot)$ is the sign function defined as\[
\sgn(z)=\begin{cases}
1 & \mbox{if }z>0,\\
0 & \mbox{if }z=0,\\
-1 & \mbox{if }z<0.\end{cases}\]
\end{defn}

\begin{defn}[\textbf{Better Response Update} \!\!\cite{Monderer1996}]
The event where a player $n$ changes to an action $a_{n}^{'}$ from
the action $a_{n}$ is a better response update if and only if $U_{n}(a_{n}^{'},a_{-n})>U_{n}(a_{n},a_{-n})$.
\end{defn}
An appealing property of the potential game is that it admits the
finite improvement property, such that any asynchronous better response
update process (i.e., no more than one player updates the strategy at any given
time) must be finite and leads to a Nash equilibrium \cite{Monderer1996}.

To show that the non-cooperative AP channel selection game $\Gamma$ is
a potential game, we now consider a closely related game $\tilde{\Gamma}=(\mathcal{N},\{\mathcal{M}_{n}\}_{n\in\mathcal{N}},\{\tilde{U}_{n}\}_{n\in\mathcal{N}})$, where the new payoff functions are \begin{equation}
\tilde{U}_{n}(\boldsymbol{a})=\frac{P_{n}/d_{n}^{\theta}}{\omega_{a_{n}}^{n}+\sum_{i\in\mathcal{N}/\{n\}:a_{i}=a_{n}}P_{i}/d_{in}^{\theta}}.\label{eq:U2}\end{equation}
Obviously, the utility function $U_{n}(\boldsymbol{a})$ can be obtained from the
utility function $\tilde{U}_{n}(\boldsymbol{a})$ by the following monotone transformation\begin{equation}
U_{n}(\boldsymbol{a})=B\log_{2}\left(1+\tilde{U}_{n}(\boldsymbol{a})\right).\label{eq:U3}\end{equation}
Due to the property of monotone transformation, we have
\begin{lem}
\label{lem:If-the-modified1}If the modified game $\tilde{\Gamma}$
is a potential game, then the original non-cooperative AP channel selection
game $\Gamma$ is also a potential game with the same potential function.\end{lem}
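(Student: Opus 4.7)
The plan is to exploit the fact that the map $x \mapsto B\log_2(1+x)$ appearing in (\ref{eq:U3}) is strictly increasing on $[0,\infty)$, and that $\tilde{U}_n(\boldsymbol{a}) \ge 0$ for every profile $\boldsymbol{a}$. Strict monotonicity means that for any two real numbers $x,y \ge 0$, the sign of $B\log_2(1+x) - B\log_2(1+y)$ agrees with the sign of $x - y$. Applying this pointwise to $x = \tilde{U}_n(a_n',a_{-n})$ and $y = \tilde{U}_n(a_n,a_{-n})$, I obtain the key identity
\begin{equation*}
\sgn\!\left(U_n(a_n',a_{-n}) - U_n(a_n,a_{-n})\right) = \sgn\!\left(\tilde{U}_n(a_n',a_{-n}) - \tilde{U}_n(a_n,a_{-n})\right).
\end{equation*}

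Next, I assume $\tilde{\Gamma}$ is a potential game with potential function $\Phi(\boldsymbol{a})$, so that by definition
\begin{equation*}
\sgn\!\left(\Phi(a_n',a_{-n}) - \Phi(a_n,a_{-n})\right) = \sgn\!\left(\tilde{U}_n(a_n',a_{-n}) - \tilde{U}_n(a_n,a_{-n})\right)
\end{equation*}
for every $n \in \mathcal{N}$, every $a_n, a_n' \in \mathcal{M}_n$, and every $a_{-n} \in \prod_{i\ne n}\mathcal{M}_i$. Chaining this with the identity from the previous paragraph yields
\begin{equation*}
\sgn\!\left(\Phi(a_n',a_{-n}) - \Phi(a_n,a_{-n})\right) = \sgn\!\left(U_n(a_n',a_{-n}) - U_n(a_n,a_{-n})\right),
\end{equation*}
which is precisely the defining condition for $\Phi$ to serve as a potential function of the original game $\Gamma$. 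Hence $\Gamma$ is a potential game with the same $\Phi$, completing the proof.

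There is essentially no obstacle here: the lemma is a direct consequence of the fact that the defining property of a (generalized ordinal) potential game depends only on the signs of payoff differences, and these signs are preserved under any strictly increasing transformation of each player's utility. The only minor point worth stating explicitly is that $\tilde{U}_n(\boldsymbol{a})$ is nonnegative (since $P_n$, $d_n^\theta$, $\omega_{a_n}^n$, and the interference terms are all nonnegative), so that the argument of $\log_2(1+\cdot)$ stays in the region where the transformation is strictly monotone.
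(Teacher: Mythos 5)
Your proof is correct and follows essentially the same route as the paper's: both arguments note that $x \mapsto B\log_2(1+x)$ is strictly increasing, deduce that the signs of payoff differences of $U_n$ and $\tilde{U}_n$ coincide, and chain this with the defining sign identity of the potential $\Phi$ for $\tilde{\Gamma}$. Your added remark about the nonnegativity of $\tilde{U}_n$ is a harmless (and slightly more careful) extra observation, but does not change the argument.
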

\begin{proof}
Since $f(x)=B\log_{2}(1+x)$ is a monotonically strictly increasing
function, we have that\begin{eqnarray*}
   \sgn\left(U_{n}(a_{n}^{'},a_{-n})-U_{n}(a_{n},a_{-n})\right) = \sgn\left(\tilde{U}_{n}(a_{n}^{'},a_{-n})-\tilde{U}_{n}(a_{n},a_{-n})\right).\end{eqnarray*}
If the modified game $\tilde{\Gamma}$ is a potential game with a
potential function $\Phi$ such that\begin{eqnarray*}
\sgn\left(\Phi(a_{n}^{'},a_{-n})-\Phi(a_{n},a_{-n})\right) = \sgn\left(\tilde{U}_{n}(a_{n}^{'},a_{-n})-\tilde{U}_{n}(a_{n},a_{-n})\right),\end{eqnarray*}
then we must also have that\begin{eqnarray*}
 \sgn\left(\Phi(a_{n}^{'},a_{-n})-\Phi(a_{n},a_{-n})\right) = \sgn\left(U_{n}(a_{n}^{'},a_{-n})-U_{n}(a_{n},a_{-n})\right),\end{eqnarray*}
which completes the proof.
\end{proof}
For the modified game $\tilde{\Gamma}$, we show in Theorem \ref{thm:The-modified-distributed2} that
it is a potential game with the following potential function\begin{equation}
\Phi(\boldsymbol{a})=-\sum_{i}\sum_{j\ne i}\frac{P_{i}P_{j}}{d_{ij}^{\theta}}I_{\{a_{i}=a_{j}\}}-2\sum_{i=1}^{N}P_{i}\omega_{a_{i}}^{i},\label{eq:P11}\end{equation}
where $I_{\{a_{i}=a_{j}\}}=1$ if $a_{i}=a_{j}$, and $I_{\{a_{i}=a_{j}\}}=0$
otherwise.
\begin{thm}
\label{thm:The-modified-distributed2}The modified game $\tilde{\Gamma}$
is a potential game with the potential function $\Phi(\boldsymbol{a})$ as given in (\ref{eq:P11}).\end{thm}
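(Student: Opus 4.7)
The plan is to verify the potential game condition directly: for any AP $n$ and any unilateral deviation from $a_n$ to $a_n'$ with $a_{-n}$ fixed, show that $\Phi(a_n',a_{-n}) - \Phi(a_n,a_{-n})$ and $\tilde U_n(a_n',a_{-n}) - \tilde U_n(a_n,a_{-n})$ have the same sign. The cleanest route is to express both quantities in terms of a single scalar, the interference-plus-noise seen by AP $n$ on its chosen channel.

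First I would simplify the sign of the utility change. Since the numerator $P_n/d_n^{\theta}$ of $\tilde U_n$ in (\ref{eq:U2}) does not depend on $a_n$, the sign of $\tilde U_n(a_n',a_{-n}) - \tilde U_n(a_n,a_{-n})$ is the opposite of the sign of the change in the denominator. Writing $D_n(a_n,a_{-n}) \triangleq \omega_{a_n}^{n} + \sum_{i\in\mathcal{N}/\{n\}:\,a_i=a_n} P_i/d_{in}^{\theta}$, this gives
\[
\sgn\bigl(\tilde U_n(a_n',a_{-n}) - \tilde U_n(a_n,a_{-n})\bigr) \;=\; -\sgn\bigl(D_n(a_n',a_{-n}) - D_n(a_n,a_{-n})\bigr).
\]

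Next I would compute $\Phi(a_n',a_{-n}) - \Phi(a_n,a_{-n})$ by isolating the summands of (\ref{eq:P11}) that depend on $a_n$. In the first double sum, exactly the index choices $i=n$ (with $j$ ranging over $j\neq n$) and $j=n$ (with $i$ ranging over $i\neq n$) carry a dependence on $a_n$; under the natural symmetry $d_{ij}=d_{ji}$ of inter-AP distances, these two contributions coincide and combine to $-2 P_n\sum_{i\neq n:\,a_i=a_n} P_i/d_{in}^{\theta}$. The $a_n$-dependent part of the second sum contributes $-2 P_n\omega_{a_n}^{n}$. Collecting both pieces, the entire $a_n$-dependent part of $\Phi$ equals $-2 P_n D_n(a_n,a_{-n})$, so
\[
\Phi(a_n',a_{-n}) - \Phi(a_n,a_{-n}) \;=\; -2 P_n\bigl[D_n(a_n',a_{-n}) - D_n(a_n,a_{-n})\bigr].
\]
Since $P_n>0$, this matches the sign identity for $\tilde U_n$ derived above, which verifies the potential game property.

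The only real subtlety is the combinatorial bookkeeping of the first term of $\Phi$: the factor of $2$ attached to the noise term and the symmetric treatment of the pair $(d_{ij},d_{ji})$ are precisely what is needed for the two contributions from $i=n$ and $j=n$ to collapse into a clean multiple of $D_n$. Once that accounting is done, matching the signs is immediate, and no further case analysis is required. Combined with Lemma \ref{lem:If-the-modified1}, this also establishes that $\Gamma$ itself is a potential game with the same potential function $\Phi$.
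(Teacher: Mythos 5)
Your proposal is correct and follows essentially the same route as the paper: both isolate the $a_n$-dependent part of $\Phi$ (using $d_{ij}=d_{ji}$ to merge the $i=n$ and $j=n$ contributions of the double sum) and show it equals $-2P_n$ times the interference-plus-noise term $D_n(a_n,a_{-n})$. The only cosmetic difference is that the paper finishes by factoring $\Delta\Phi$ explicitly as a positive multiple of $\tilde U_n(a_n',a_{-n})-\tilde U_n(a_n,a_{-n})$, whereas you pass through the sign of $\Delta D_n$; the two are equivalent.
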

\begin{proof}
Suppose that an AP $k$ changes its channel $a_{k}$ to $a_{k}^{'}$ such
that the strategy profile changes from $\boldsymbol{a}$ to $\boldsymbol{a}^{'}$. We have that\begin{align*}
 &  \Phi(\boldsymbol{a}^{'})-\Phi(\boldsymbol{a})\\
  = & -\sum_{j\ne k}\frac{P_{k}P_{j}}{d_{kj}^{\theta}}I_{\{a_{k}^{'}=a_{j}\}}+\sum_{j\ne k}\frac{P_{k}P_{j}}{d_{kj}^{\theta}}I_{\{a_{k}=a_{j}\}}
  -\sum_{i\ne k}\frac{P_{i}P_{k}}{d_{ik}^{\theta}}I_{\{a_{i}=a_{k}^{'}\}}+\sum_{i\ne k}\frac{P_{i}P_{k}}{d_{ik}^{\theta}}I_{\{a_{i}=a_{k}\}}
  -2P_{k}\omega_{a_{k}^{'}}^{k}+2P_{k}\omega_{a_{k}}^{k}.\end{align*}
Since $d_{ij}=d_{ji}$, we thus have that\begin{align*}
   & \Phi(\boldsymbol{a}^{'})-\Phi(\boldsymbol{a})\\
  = & -2\sum_{i\ne k}\frac{P_{i}P_{k}}{d_{ik}^{\theta}}I_{\{a_{i}=a_{k}^{'}\}}+2\sum_{i\ne k}\frac{P_{i}P_{k}}{d_{ik}^{\theta}}I_{\{a_{i}=a_{k}\}}
  -2P_{k}\omega_{a_{k}^{'}}^{k}+2P_{k}\omega_{a_{k}}^{k}\\
  = & -2P_{k}\left(\sum_{i\ne k:I_{\{a_{i}=a_{k}^{'}\}}}\frac{P_{i}}{d_{ik}^{\theta}}+\omega_{k,a_{k}^{'}}\right)
  +2P_{k}\left(\sum_{i\ne k:I_{\{a_{i}=a_{k}\}}}\frac{P_{i}}{d_{ik}^{\theta}}+\omega_{k,a_{k}}\right)\\
  = & 2d_{k}^{\theta}\left(\sum_{i\ne k:I_{\{a_{i}=a_{k}^{'}\}}}\frac{P_{i}}{d_{ik}^{\theta}}+\omega_{k,a_{k}^{'}}\right)\left(\sum_{i\ne k:I_{\{a_{i}=a_{k}\}}}\frac{P_{i}}{d_{ik}^{\theta}}+\omega_{k,a_{k}}\right)\\
   & \times\left(\frac{P_{k}/d_{k}^{\theta}}{\sum_{i\ne k:I_{\{a_{i}=a_{k}^{'}\}}}\frac{P_{i}}{d_{ik}^{\theta}}+\omega_{a_{k}^{'}}^{k}}-\frac{P_{k}/d_{k}^{\theta}}{\sum_{i\ne k:I_{\{a_{i}=a_{k}\}}}\frac{P_{i}}{d_{ik}^{\theta}}+\omega_{a_{k}}^{k}}\right)\\
  = & 2d_{k}^{\theta}\left(\sum_{i\ne k:I_{\{a_{i}=a_{k}^{'}\}}}\frac{P_{i}}{d_{ik}^{\theta}}+\omega_{a_{k}^{'}}^{k}\right)\left(\sum_{i\ne k:I_{\{a_{i}=a_{k}\}}}\frac{P_{i}}{d_{ik}^{\theta}}+\omega_{a_{k}}^{k}\right)\left(\tilde{U}_{k}(a_{k}^{'},a_{-k})-\tilde{U}_{k}(a_{k},a_{-k})\right),\end{align*}
which completes the proof.
\end{proof}
According to Lemma \ref{lem:If-the-modified1} and Theorem \ref{thm:The-modified-distributed2},
we know that
\begin{thm}
\label{thm:The-distributed-channel}The non-cooperative AP channel selection
game $\Gamma$ is a potential game, which has a Nash equilibrium and
the finite improvement property.
\end{thm}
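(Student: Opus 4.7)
The plan is essentially to chain together the two results that immediately precede the statement. By Theorem \ref{thm:The-modified-distributed2}, the modified game $\tilde{\Gamma}$ is a potential game with potential function $\Phi(\boldsymbol{a})$ defined in (\ref{eq:P11}). Then by Lemma \ref{lem:If-the-modified1}, which transfers the potential-game property through the strictly increasing transformation $f(x)=B\log_2(1+x)$, the original non-cooperative AP channel selection game $\Gamma$ inherits the same potential function $\Phi(\boldsymbol{a})$. This handles the first claim.

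For the existence of a Nash equilibrium and the finite improvement property, I would simply invoke the well-known structural result of Monderer and Shapley \cite{Monderer1996}: every finite potential game possesses the finite improvement property, i.e., every asynchronous sequence of strictly better response updates terminates in finitely many steps. Here the strategy space $\prod_{n\in\mathcal{N}}\mathcal{M}_n$ is finite (each AP has a finite set of feasible channels supplied by the database), so the argument applies verbatim. The endpoint of any such improvement path must be a profile at which no single AP can unilaterally increase $U_n$, which by definition is a pure-strategy Nash equilibrium of $\Gamma$. This yields both conclusions of the theorem.

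There is essentially no obstacle to overcome here: the heavy lifting was done in establishing the potential function in Theorem \ref{thm:The-modified-distributed2} and in showing invariance under monotone transformations in Lemma \ref{lem:If-the-modified1}. The only minor subtlety worth mentioning explicitly is that Lemma \ref{lem:If-the-modified1} ensures the sign condition in the definition of a potential game is preserved, which is exactly what is needed so that better responses in $\Gamma$ coincide with better responses in $\tilde{\Gamma}$ and hence correspond to strict increases of the common potential $\Phi$. Given the brevity, I would state the proof in a couple of sentences as a direct corollary rather than reworking any calculations.
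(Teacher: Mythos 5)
Your proposal matches the paper's own treatment exactly: the paper presents this theorem as an immediate consequence of Lemma \ref{lem:If-the-modified1} and Theorem \ref{thm:The-modified-distributed2}, combined with the standard fact from \cite{Monderer1996} that finite potential games admit Nash equilibria and the finite improvement property. Your argument is correct and no gap remains.
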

The result in Theorem \ref{thm:The-distributed-channel} implies that
any asynchronous better response update is guaranteed to reach a
Nash equilibrium within a finite number of iterations. This motivates
the algorithm design in Section \ref{sec:Distributed-Channel-Selection}. Interestingly, according to the property of potential game, any channel selection
profile $\boldsymbol{a}$ that maximizes the potential function $\Phi(\boldsymbol{a})$
is a Nash equilibrium \cite{Monderer1996}. According to (\ref{eq:P11}), the profile $\boldsymbol{a}^{*}$ is also an efficient system-wide solution, since maximizing the potential function $\Phi(\boldsymbol{a})$ is equivalent to minimizing the total weighted interferences (with a weight of $P_{n}$) among all the APs.

\subsection{\label{sec:Distributed-Channel-Selection}Non-Cooperative AP Channel
Selection Algorithm}

The purpose of designing this algorithm  is to allow APs to select their
channels in a distributed manner to achieve a mutually acceptable
resource allocation, i.e., an Nash equilibrium. The key idea is to
let APs asynchronously improve their channel selections according
to the finite improvement property.

\begin{algorithm}[tt]
\begin{algorithmic}[1]
\State \textbf{initialization:}
\State \hspace{0.4cm} \textbf{set} the initial channel $a_{n}(0)=m_{n}$ for
each AP $n\in\mathcal{N}$, initial channel selection profile as $\boldsymbol{a}(0)=(a_{1}(0),...,a_{N}(0))$, and the stage index $t=0$.
\State \textbf{end initialization\newline}

\While{$\boldsymbol{a}(t)$ is not a Nash equilibrium}
        \For{AP $n=1$ to $N$}
            \State \textbf{choose} the channel $a_{n}(t+1)$ that maximizes its own throughput according to (\ref{eq:bs1}).
        \EndFor
        \State \textbf{set} channel selection profile as $\boldsymbol{a}(t+1)=(a_{1}(t+1),...,a_{N}(t+1))$ and the stage index $t=t+1$.
\EndWhile

\end{algorithmic}
\caption{\label{alg:Online-Distributed-Channel} Non-Cooperative AP Channel Selection Algorithm}
\end{algorithm}

We assume that when an AP queries the geo-location database, the database
will assign it with a unique ID indexed as $1,2,3,...$. For initialization,
we let each AP $n$ select the channel $m_{n}$ that has the smallest channel ID index among its feasible channels $\mathcal{M}_{n}$, i.e., $a_{n}(0)=m_{n}$.
Then based on the initialized channel selection profile $\boldsymbol{a}(0)=(a_{1}(0),...,a_{N}(0))$,
each AP $n$ in turn (according to the assigned IDs) carries out the
best response update, i.e., select a channel $a_{n}(t+1)$ that maximizes
its own throughput as \begin{eqnarray}
a_{n}(t+1) = \arg\max_{a\in\mathcal{M}_{n}}U_{n}(a,a_{1}(t+1),...,a_{n-1}(t+1),a_{n+1}(t),...,a_{N}(t)),\label{eq:bs1}\end{eqnarray}
given the channel selections $\{a_{1}(t+1),...,a_{n-1}(t+1)\}$ of
the updated APs, and the channel selections $\{a_{n+1}(t),...,a_{N}(t)\}$
of remaining APs that are not updated at the current stage $t$. Such update procedure
continues until a Nash equilibrium is reached. Since the best
response update is also a better response update, according to the
finite improvement property, such asynchronous best response updates
must achieve a Nash equilibrium within finite number of iterations.
We summarize the non-cooperative AP channel selection algorithm in Algorithm
\ref{alg:Online-Distributed-Channel}. We then consider the computational complexity of the algorithm. Lines $5$ to $7$ involves $N$ maximization operations and each maximization operation can be achieved by sorting over at most $M$ values. This step typically has a complexity of $\mathcal{O}(NM\log M)$. Line $10$ has the complexity of $\mathcal{O}(1)$. Suppose that it takes $C$ iterations for the algorithm to converge. Then total computational complexity of the algorithm is $\mathcal{O}(CNM\log M)$. Similarly, the space complexity of the algorithm is $\mathcal{O}(NM)$.

The Algorithm \ref{alg:Online-Distributed-Channel} requires all
APs to truthfully communicate with each other about their channel selections. When
such a requirement is not feasible, each AP can independently implement
Algorithm \ref{alg:Online-Distributed-Channel} by acquiring the assigned
IDs, available channels, and transmission powers of other APs from
the database. Note that such an off-line implementation is incentive
compatible, since given other APs adhere to the algorithm and the update order is fixed, no AP
has an incentive to deviate unilaterally from the algorithm (due to
the deterministic Nash equilibrium output).

\subsection{\label{sec:PoA}Price of Anarchy}
We now study the efficiency of Nash equilibria of the non-cooperative AP channel selection Game. Following the definition of price of anarchy (PoA) in game theory \cite{roughgarden2007introduction}, we will quantify the efficiency ratio of the worst-case Nash equilibrium over the optimal solution by the cooperative AP channel selection. Let $\Xi$ be the set of Nash equilibria of the game. Then the PoA is defined as\[
\mbox{PoA}=\frac{\min_{\boldsymbol{a}\in\Xi}\sum_{n=1}^{N}U_{n}(\boldsymbol{a})}{\max_{\boldsymbol{a}\in\prod_{n=1}^{N}\mathcal{M}_{n}}\sum_{n=1}^{N}U_{n}(\boldsymbol{a})},\]
which is always not greater than $1$. A larger PoA implies that the
set of Nash equilibrium is more efficient (in the worst-case sense when comparing with the system optimal solution). Let $\overline{\omega_{n}}=\max_{m\in\mathcal{M}_{n}}\{\omega_{m}^{n}\}$
and $\underline{\omega_{n}}=\min_{m\in\mathcal{M}_{n}}\{\omega_{m}^{n}\}$.
We can first show that
\begin{lem}
\label{lem:PoA}For the non-cooperative AP channel selection game,
the throughput of an AP $n\in\mathcal{N}$ at a Nash equilibrium is
no less than $B\log_{2}\left(1+\frac{P_{n}/d_{n}^{\theta}}{\overline{\omega_{n}}+\left(\sum_{i\in\mathcal{N}/\{n\}}P_{i}/d_{in}^{\theta}\right)/|\mathcal{M}_{n}|}\right),$
where $|\mathcal{M}_{n}|$ is the number of vacant channels for AP
$n$.\end{lem}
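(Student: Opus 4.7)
The plan is to use a pigeonhole/averaging argument combined with the Nash equilibrium definition. At equilibrium $\boldsymbol{a}^{*}$, AP $n$ cannot improve by unilateral deviation, so its equilibrium throughput is at least the throughput it would obtain on any alternative channel $m \in \mathcal{M}_n$. Thus it suffices to exhibit one channel in $\mathcal{M}_n$ on which the worst-case throughput is at least the claimed bound.

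First, I would sum the interference AP $n$ would experience across all of its feasible channels. For each $m \in \mathcal{M}_n$, the interference AP $n$ sees on channel $m$ (from the other APs' equilibrium choices) equals $\sum_{i \ne n : a^{*}_i = m} P_i/d_{in}^{\theta}$. Summing over $m \in \mathcal{M}_n$, each other AP $i$ contributes at most once (only when $a^{*}_i \in \mathcal{M}_n$), so
\begin{equation*}
\sum_{m \in \mathcal{M}_n} \sum_{i \ne n : a^{*}_i = m} \frac{P_i}{d_{in}^{\theta}} \;\le\; \sum_{i \in \mathcal{N}/\{n\}} \frac{P_i}{d_{in}^{\theta}}.
\end{equation*}
By the pigeonhole principle, there exists a channel $m^{*} \in \mathcal{M}_n$ whose aggregate interference satisfies
\begin{equation*}
\sum_{i \ne n : a^{*}_i = m^{*}} \frac{P_i}{d_{in}^{\theta}} \;\le\; \frac{1}{|\mathcal{M}_n|}\sum_{i \in \mathcal{N}/\{n\}} \frac{P_i}{d_{in}^{\theta}}.
\end{equation*}

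Next, since $\omega_{m^{*}}^{n} \le \overline{\omega_n}$ by definition of $\overline{\omega_n}$, the throughput AP $n$ would obtain by deviating to $m^{*}$ is at least
\begin{equation*}
B\log_{2}\!\left(1+\frac{P_{n}/d_{n}^{\theta}}{\overline{\omega_{n}}+\bigl(\sum_{i\in\mathcal{N}/\{n\}}P_{i}/d_{in}^{\theta}\bigr)/|\mathcal{M}_{n}|}\right),
\end{equation*}
using monotonicity of $B\log_2(1+P_n/d_n^\theta\,/\,x)$ in the denominator $x$. Finally, by the Nash equilibrium condition $U_n(\boldsymbol{a}^{*}) \ge U_n(m^{*}, a^{*}_{-n})$, the equilibrium throughput of AP $n$ is bounded below by the same expression, which is exactly the claim.

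There is no substantive obstacle here; the proof is essentially a two-line combination of an averaging bound on the worst available channel and the best-response inequality. The only care required is to distinguish between the interference on a specific channel (the quantity appearing in $U_n$) and the total possible interference from all other APs (which upper-bounds the sum over $\mathcal{M}_n$), and to correctly invoke the worst-case background noise $\overline{\omega_n}$ when weakening the bound to a channel-independent form.
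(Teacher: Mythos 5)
Your proof is correct and is essentially the same argument as the paper's: both rest on averaging the interference over the $|\mathcal{M}_n|$ feasible channels to find a channel with at most the mean interference, bounding the noise there by $\overline{\omega_n}$, and invoking the Nash best-response inequality. The only difference is presentational — the paper runs the identical computation as a proof by contradiction, while you state it directly via pigeonhole — and your explicit note that the channel-wise interference sum is only an upper bound ($\le$) on the total, rather than an equality, is actually slightly more careful than the paper's own writeup.
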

\begin{proof}
We will prove the result by contradiction. Suppose that an AP $n$
at a Nash equilibrium $\boldsymbol{a}^{*}$ has a throughput less
than $B\log_{2}\left(1+\frac{P_{n}/d_{n}^{\theta}}{\overline{\omega_{n}}+\left(\sum_{i\in\mathcal{N}/\{n\}}P_{i}/d_{in}^{\theta}\right)/|\mathcal{M}_{n}|}\right)$.
From the throughput function in (\ref{eq:U}), we must have that \begin{equation}
\omega_{a_{n}^{*}}^{n}+\sum_{i\in\mathcal{N}/\{n\}:a_{i}^{*}=a_{n}^{*}}P_{i}/d_{in}^{\theta}>\overline{\omega_{n}}+\left(\sum_{i\in\mathcal{N}/\{n\}}P_{i}/d_{in}^{\theta}\right)/|\mathcal{M}_{n}|.\label{eq:poa1}\end{equation}
According to the definition of Nash equilibrium (no AP can improve
by changing channel unilaterally), we also have that\begin{equation}
\omega_{m}^{n}+\sum_{i\in\mathcal{N}/\{n\}:a_{i}^{*}=m}P_{i}/d_{in}^{\theta}\geq\omega_{a_{n}^{*}}^{n}+\sum_{i\in\mathcal{N}/\{n\}:a_{i}^{*}=a_{n}^{*}}P_{i}/d_{in}^{\theta},\forall m\in\mathcal{M}_{n},\label{eq:poa2}\end{equation}
which implies that\begin{eqnarray}
 &  & \sum_{m\in\mathcal{M}_{n}}\left(\omega_{m}^{n}+\sum_{i\in\mathcal{N}/\{n\}:a_{i}^{*}=m}P_{i}/d_{in}^{\theta}\right)\nonumber \\
 & = & \sum_{m\in\mathcal{M}_{n}}\omega_{m}^{n}+\sum_{i\in\mathcal{N}/\{n\}}P_{i}/d_{in}^{\theta}\geq|\mathcal{M}_{n}|\left(\omega_{a_{n}^{*}}^{n}+\sum_{i\in\mathcal{N}/\{n\}:a_{i}^{*}=a_{n}^{*}}P_{i}/d_{in}^{\theta}\right).\label{eq:poa3}\end{eqnarray}
According to (\ref{eq:poa1}) and (\ref{eq:poa3}), we now reach a
contradiction that \begin{equation*}
\left(\sum_{m\in\mathcal{M}_{n}}\omega_{m}^{n}\right)/|\mathcal{M}_{n}|+\left(\sum_{i\in\mathcal{N}/\{n\}}P_{i}/d_{in}^{\theta}\right)/|\mathcal{M}_{n}|
  >  \overline{\omega_{n}}+\left(\sum_{i\in\mathcal{N}/\{n\}}P_{i}/d_{in}^{\theta}\right)/|\mathcal{M}_{n}|.\end{equation*}
This proves the result.
\end{proof}
Lemma \ref{lem:PoA} implies that at a Nash equilibrium each AP will
receive an interference level that is not greater than the maximum
possible interference level (i.e., $\sum_{i\in\mathcal{N}/\{n\}}P_{i}/d_{in}^{\theta}$)
divided by the number of its available channels. That is, if more
channels are available then the performance of Nash equilibria can
be improved. According to Lemma \ref{lem:PoA}, we know that
\begin{cor}
The PoA of the non-cooperative AP channel selection game is lower bounded by \[\frac{\sum_{n=1}^{N}\log_{2}\left(1+\frac{P_{n}/d_{n}^{\theta}}{\overline{\omega_{n}}+\left(\sum_{i\in\mathcal{N}/\{n\}}P_{i}/d_{in}^{\theta}\right)/|\mathcal{M}_{n}|}\right)}{\sum_{n=1}^{N}\log_{2}\left(1+\frac{P_{n}/d_{n}^{\theta}}{\underline{\omega_{n}}}\right)}.\]\end{cor}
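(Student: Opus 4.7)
The plan is to bound the numerator and denominator of the PoA separately and then combine. Since the factor $B$ appears in every term of both sums, it cancels in the ratio, so I can work with the pure $\log_2$ terms.

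For the numerator $\min_{\boldsymbol{a}\in\Xi}\sum_{n=1}^{N}U_{n}(\boldsymbol{a})$, I would apply Lemma \ref{lem:PoA} to each AP $n$ at any Nash equilibrium $\boldsymbol{a}^{*}\in\Xi$. The lemma directly gives the per-AP lower bound $U_n(\boldsymbol{a}^{*})\geq B\log_{2}\bigl(1+\frac{P_{n}/d_{n}^{\theta}}{\overline{\omega_{n}}+(\sum_{i\in\mathcal{N}/\{n\}}P_{i}/d_{in}^{\theta})/|\mathcal{M}_{n}|}\bigr)$, and summing over $n\in\mathcal{N}$ yields a lower bound on $\sum_{n=1}^{N}U_{n}(\boldsymbol{a}^{*})$ that does not depend on the particular equilibrium chosen. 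Hence the same expression lower bounds the minimum over $\Xi$.

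For the denominator $\max_{\boldsymbol{a}}\sum_{n=1}^{N}U_{n}(\boldsymbol{a})$, I would upper bound each summand by ignoring inter-AP interference altogether and using the most favorable background noise: from (\ref{eq:U}), for any $\boldsymbol{a}$,
\begin{equation*}
U_{n}(\boldsymbol{a})\leq B\log_{2}\!\left(1+\frac{P_{n}/d_{n}^{\theta}}{\omega_{a_{n}}^{n}}\right)\leq B\log_{2}\!\left(1+\frac{P_{n}/d_{n}^{\theta}}{\underline{\omega_{n}}}\right),
\end{equation*}
since the denominator inside $U_n$ is minimized when the interference term vanishes and $\omega_{a_n}^n$ is replaced by its minimum $\underline{\omega_n}$ over $m\in\mathcal{M}_n$. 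Summing over $n$ yields an upper bound on the optimal system throughput.

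Taking the ratio of the lower bound (numerator) to the upper bound (denominator), the common factor $B$ cancels and I obtain exactly the claimed bound. There is no real technical obstacle here: the work was done in Lemma \ref{lem:PoA}, and the remaining step is the trivial worst/best-case bounding of the optimum and the ratio algebra. The only thing worth flagging is that the upper bound on the denominator is valid for every strategy profile, so it holds in particular at the maximizer, which is what keeps the ratio a legitimate lower bound on PoA.
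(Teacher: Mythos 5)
Your proposal is correct and matches the paper's own proof: the paper likewise invokes Lemma \ref{lem:PoA} to lower bound the numerator at any Nash equilibrium and upper bounds the denominator by dropping the inter-AP interference and replacing $\omega_{a_n}^n$ with $\underline{\omega_n}$, after which the common factor $B$ cancels in the ratio. No substantive difference.
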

\begin{proof}
According to Lemma \ref{lem:PoA}, we have that\begin{eqnarray*}
\mbox{PoA} & \geq & \frac{\sum_{n=1}^{N}B\log_{2}\left(1+\frac{P_{n}/d_{n}^{\theta}}{\overline{\omega_{n}}+\left(\sum_{i\in\mathcal{N}/\{n\}}P_{i}/d_{in}^{\theta}\right)/|\mathcal{M}_{n}|}\right)}{\max_{\boldsymbol{a}\in\prod_{n=1}^{N}\mathcal{M}_{n}}\sum_{n=1}^{N}B\log_{2}\left(1+\frac{P_{n}/d_{n}^{\theta}}{\omega_{a_{n}}^{n}+\sum_{i\in\mathcal{N}/\{n\}:a_{i}=a_{n}}P_{i}/d_{in}^{\theta}}\right)}\\
 & > & \frac{\sum_{n=1}^{N}\log_{2}\left(1+\frac{P_{n}/d_{n}^{\theta}}{\overline{\omega_{n}}+\left(\sum_{i\in\mathcal{N}/\{n\}}P_{i}/d_{in}^{\theta}\right)/|\mathcal{M}_{n}|}\right)}{\sum_{n=1}^{N}\log_{2}\left(1+\frac{P_{n}/d_{n}^{\theta}}{\underline{\omega_{n}}}\right)}.\end{eqnarray*}
\end{proof}

The PoA characterizes the worst-case performance of Nash equilibria.
Numerical results in Section VII demonstrate that the convergent Nash
equilibrium of the proposed algorithm in Section \ref{sec:Distributed-Channel-Selection} is often more efficient than what the PoA indicates
and the performance loss is less than $8\%$, compared with the optimal
solution by the cooperative AP channel selection.

\section{\label{sec:Distributed-AP-Association-1}Distributed AP Association
By Mobile Secondary Users}

We now consider the distributed AP association problem among a set of
mobile secondary users $\mathcal{K}=\{1,2,...,K\}$ in the SU tier. Let $x_{n}$
be the number of users that associate with AP $n$, which satisfies
that $\sum_{n=1}^{N}x_{n}=K$. We assume that the APs' cooperative/non-cooperative channel selections in the AP tier and the users' AP associations in the SU tier are decoupled, i.e., APs only interested in guaranteeing their throughputs by proper channel selections and users can improve their data rates by proper AP associations. The load-aware AP channel selection will be considered in a future work.

\subsection{Channel Contention Within an AP}

We first consider the channel contention when multiple secondary users
associate with the same AP. Here we adopt a random backoff mechanism to resolve the channel contention. More specifically, the time is slotted (see
Figure \ref{fig:Time-Slot-Structure}), with a contention stage being
divided into $\lambda_{\max}$ mini-slots.\footnote{For the ease of exposition, we assume that the contention backoff size $\lambda_{\max}$ is fixed. This corresponds to an equilibrium model for the case that the backoff size $\lambda_{\max}$ can be dynamically tuned according to the 802.11 distributed coordination
function \cite{bianchi2000performance}. Also, we can enhance the performance of the backoff mechanism by determining optimal fixed contention backoff size according to the method in \cite{kriminger2011markov}.} Each secondary user $k$
executes the following two steps:
\begin{enumerate}
\item Count down according to a randomly and uniformly chosen integral backoff
time (number of mini-slots) $\lambda_{k}$ between $1$ and $\lambda_{\max}$.
\item Once the timer expires, monitor the channel and exchange RTS/CTS messages
with the AP in order to grab the channel if the channel is clear (i.e.,
no ongoing transmission). Note that if multiple users choose the same
backoff mini-slot, a collision will occur with RTS/CTS transmissions
and no users can grab the channel. Once the RTS/CTS message exchange
goes through, then the AP starts to transmit the data packets to the
user.
\end{enumerate}

Since $x_{n}$ users contend for the channel in AP $n$, the probability
that a user $k$ (out of these $x_{n}$ users) grabs the channel successfully is\begin{align}
g(x_{n}) = Pr\{\lambda_{k}<\min_{i\neq k}\{\lambda_{i}\}\} = \sum_{\lambda=1}^{\lambda_{\max}}Pr\{\lambda_{k}=\lambda\}Pr\{\lambda<\min_{i\neq k}\{\lambda_{i}\}|\lambda_{k}=\lambda\} = \sum_{\lambda=1}^{\lambda_{\max}}\frac{1}{\lambda_{\max}}\left(\frac{\lambda_{\max}-\lambda}{\lambda_{\max}}\right)^{x_{n}-1},\label{eq:gg}\end{align}
which is a decreasing function of the total number of contending users
$x_{n}$. Then the average data rate of a secondary user $k$ associating
with AP $n$ is given as\begin{equation}
r_{k}=H_{n}^{k}U_{n}(\boldsymbol{a}^{*})g(x_{n}),\label{eq:u1}\end{equation}
where $U_{n}(\boldsymbol{a}^{*})$ is the throughput at the boundary of the coverage area of AP $n$ at the equilibrium
channel selections $\boldsymbol{a}^{*}$ by cooperative/non-cooperative AP channel selection algorithms, and $H_{n}^{k}\geq1$ is the transmission gain of user $k$. Here the transmission gain is used to model user specific throughputs due to their heterogeneous channel conditions. For example, a user enjoys a better channel condition than all other users if it is the closest to the AP.

\begin{figure}[tt]
\centering
\includegraphics[scale=0.7]{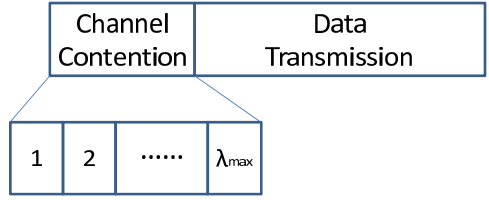}
\caption{\label{fig:Time-Slot-Structure}Time slot structure of channel contention}
\end{figure}

\subsection{Distributed AP Association Game}

Due to the channel contention
within an AP $n$, the average data rate $r_{k}$ of a secondary user
$k$ decreases with the total number of contending users $x_{n}$.
To improve the data rate $r_{k}$, the secondary user $k$ can choose
to move to another AP $n'$ with less users. However, in
practices people may not prefer long distance movements (just for the sake of obtaining better communication experiences), which motivates
us to take the cost of mobility into account. By defining the current
location profile of all secondary users as a system state, we next
formulate the distributed AP association problem as a \textbf{state-based
game} \cite{LiM10} as follows:
\begin{itemize}
\item Player $k$: a secondary user from the set $\mathcal{K}$.
\item Strategy $b_{k}$: choose an AP $n\in\mathcal{N}$ to associate with. We denote the strategy profile of all
users as $\boldsymbol{b}\triangleq(b_{1},...,b_{K})$.
\item State $\boldsymbol{s}\triangleq(s_{1},...,s_{K})$: the current locations (i.e.,
the associated APs) of all secondary users, where $s_{k}$ denote
the location of user $k$.
\item State Transition $\boldsymbol{s}^{'}=F(\boldsymbol{b},\boldsymbol{s})$: in general the new state $\boldsymbol{s}^{'}$ is determined by the strategies $\boldsymbol{b}$ of all secondary users and the original state $\boldsymbol{s}$, where
$F(\cdot)$ denotes the state transition function. For our problem, we
have that $F(\boldsymbol{b},\boldsymbol{s})=\boldsymbol{b}$, i.e., the new locations just depend on secondary users' AP choices and independent of the original system state.
\item Payoff $V_{k}(\boldsymbol{b},\boldsymbol{s})$: secondary user $k$'s  utility obtained from
the strategy profile $\boldsymbol{b}$ in state $\boldsymbol{s}$. To take the cost of mobility
into account, we define \begin{eqnarray}
V_{k}(\boldsymbol{b},\boldsymbol{s})  =  r_{k}-\delta_{k}d_{b_{k}s_{k}} = H_{b_{k}}^{k}U_{b_{k}}(\boldsymbol{a}^{*})g(x_{b_{k}}(\boldsymbol{b}))-\delta_{k}d_{b_{k}s_{k}},\label{eq:V1}\end{eqnarray}
where $x_{b_{k}}(\boldsymbol{b})$ is the number of contending users associated with AP
$b_{k}$ under strategy profile $\boldsymbol{b}$, $\delta_{k}>0$ is the factor
representing the weight of mobility cost in user $k$'s decision,
and $d_{b_{k}s_{k}}$ is the distance of moving to AP $b_{k}$ from AP $s_{k}$ ($d_{b_{k}s_{k}}=0$ if $b_{k}=s_{k}$).  Note that the distance measure here can represent more general preference functions and can also be asymmetric. For example, we can define that $d_{b_{k}^{'}b_{k}}>d_{b_{k}b_{k}^{'}}$ if $b_{k}$ is a popular shopping mall where uses like to stay. The physical meaning
of (\ref{eq:V1}) is to balance the average data rate that a user
can obtain from moving to a new AP $b_{k}$ with the mobility cost by
moving from its current associated AP $s_{k}$.
\end{itemize}

Since the state-based game is a generalized game theoretic framework
(by regarding the classical strategic game as a state-based game with
a constant state), we need an updated equilibrium concept. Here
we follow the recent results in \cite{LiM10} and introduce the state-based
Nash equilibrium. To proceed, we first define the set of reachable
states $\triangle(\boldsymbol{b}^{0},\boldsymbol{s}^{0})$ starting from a strategy state pair
$(\boldsymbol{b}^{0},\boldsymbol{s}^{0})$ as \begin{equation}
\triangle(\boldsymbol{b}^{0},\boldsymbol{s}^{0})\triangleq\{\boldsymbol{s}^{t}:\boldsymbol{s}^{t}=F(\boldsymbol{b}^{0},\boldsymbol{s}^{t-1}),\forall t\geq1\}.\label{eq:s1}\end{equation}
We then extend the definition of Nash equilibrium to the state-based
game setting as follows.
\begin{defn}[\textbf{State-based Nash Equilibrium} \!\!\cite{LiM10}]
\label{def:A-strategy-state}A strategy state pair $(\boldsymbol{b}^{*},\boldsymbol{s}^{*})$
is a state-based Nash equilibrium if \\
1) the state $\boldsymbol{s}^{*}$ is reachable from $(\boldsymbol{b}^{*},\boldsymbol{s}^{*})$, i.e.,
$\boldsymbol{s}^{*}\in\triangle(\boldsymbol{b}^{*},\boldsymbol{s}^{*})$.\\
2) for every player $k\in\mathcal{K}$ and every state $\boldsymbol{s}\in\triangle(\boldsymbol{b}^{*},\boldsymbol{s}^{*})$,
we have\begin{equation}
V_{k}(\boldsymbol{b}^{*},\boldsymbol{s})=\max_{b_{k}}V_{k}(b_{k},b_{-k}^{*},\boldsymbol{s}).\label{eq:s2}\end{equation}\end{defn}

The physical meaning of the state-based Nash equilibrium is that the
state $\boldsymbol{s}^{*}$ is recurrent and the strategy profile $\boldsymbol{b}^{*}$ is the
best response no matter how the game state evolves after-wards. In
principle, the state-based game is a special case of the stochastic
game, which is difficult to tackle. However, we are able to solve
the distributed AP association game by exploiting its inherent structure
property. A key observation is that, similarly to the classical
potential game, the state-based distributed AP association game also
admits a state-based potential function as\begin{equation}
\Psi(\boldsymbol{b},\boldsymbol{s})=\sum_{k=1}^{K}\ln U_{b_{k}}(\boldsymbol{a}^{*})+\sum_{n=1}^{N}\sum_{i=0}^{x_{n}(\boldsymbol{b})}\ln g(i)+\sum_{k=1}^{K}\ln H_{b_{k}}^{k}\label{eq:s3}\end{equation}

For the state-based potential function $\Psi(\boldsymbol{b},\boldsymbol{s})$, we have
\begin{lem}\label{lemmaS}
For the state-based distributed AP association game, if a player $k\in\mathcal{K}$
performs a better response update $b_{k}$ in a given state $\boldsymbol{s}=(s_{k},s_{-k})$
with \[
V_{k}(b_{k},s_{-k},\boldsymbol{s})>V_{k}(\boldsymbol{s},\boldsymbol{s}),\]
we then have that\[
\Psi(b_{k},s_{-k},\boldsymbol{s})>\Psi(\boldsymbol{s},\boldsymbol{s}).\]
\end{lem}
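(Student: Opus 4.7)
The plan is to verify the ordinal potential inequality directly by computing the change in $\Psi$ under a unilateral deviation from $s_k$ to $b_k$ and matching it, term by term, with the change in $V_k$ (after discarding the mobility cost). A key observation is that $\Psi(\boldsymbol{b},\boldsymbol{s})$ as defined in (\ref{eq:s3}) depends only on the strategy profile and not on the state $\boldsymbol{s}$, so the argument reduces to a standard congestion-game style telescoping calculation, with the distance penalty $\delta_k d_{b_k s_k}$ handled separately.

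First I would unpack the hypothesis $V_k(b_k,s_{-k},\boldsymbol{s}) > V_k(\boldsymbol{s},\boldsymbol{s})$ using (\ref{eq:V1}). Since the strategy profile in $V_k(\boldsymbol{s},\boldsymbol{s})$ has $b_k = s_k$, the mobility cost $d_{s_k s_k}$ vanishes, and the hypothesis reads
\begin{equation*}
H_{b_k}^{k}\,U_{b_k}(\boldsymbol{a}^{*})\,g(x_{b_k}(b_k,s_{-k})) - \delta_k d_{b_k s_k} \;>\; H_{s_k}^{k}\,U_{s_k}(\boldsymbol{a}^{*})\,g(x_{s_k}(\boldsymbol{s})).
\end{equation*}
Dropping the nonnegative term $\delta_k d_{b_k s_k}$ from the left-hand side only strengthens the strict inequality, and since all remaining quantities are strictly positive I may take $\ln$ on both sides to obtain
\begin{equation*}
\ln H_{b_k}^{k}+\ln U_{b_k}(\boldsymbol{a}^{*})+\ln g(x_{b_k}(b_k,s_{-k})) \;>\; \ln H_{s_k}^{k}+\ln U_{s_k}(\boldsymbol{a}^{*})+\ln g(x_{s_k}(\boldsymbol{s})).
\end{equation*}

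Next I would expand $\Psi(b_k,s_{-k},\boldsymbol{s})-\Psi(\boldsymbol{s},\boldsymbol{s})$ term-by-term. Only player $k$'s AP assignment changes, so the first sum contributes $\ln U_{b_k}(\boldsymbol{a}^{*}) - \ln U_{s_k}(\boldsymbol{a}^{*})$ and the last sum contributes $\ln H_{b_k}^{k} - \ln H_{s_k}^{k}$. For the congestion term $\sum_n \sum_{i=0}^{x_n(\boldsymbol{b})}\ln g(i)$, the deviation raises $x_{b_k}$ by one and lowers $x_{s_k}$ by one while leaving every other $x_n$ unchanged, so telescoping gives a net change of $\ln g(x_{b_k}(\boldsymbol{s})+1) - \ln g(x_{s_k}(\boldsymbol{s}))$. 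Using $x_{b_k}(b_k,s_{-k})=x_{b_k}(\boldsymbol{s})+1$, the total change in $\Psi$ equals exactly the difference between the logged expressions derived in the previous paragraph, which is strictly positive.

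The only subtle point is recognizing that the lemma asks for an \emph{ordinal} potential property on better responses, which is what justifies slackening the payoff inequality by the nonnegative mobility cost $\delta_k d_{b_k s_k}$ that is absent from $\Psi$. I expect the main bookkeeping hazard to be correctly lining up the three congestion counts $x_{b_k}(\boldsymbol{s})$, $x_{b_k}(b_k,s_{-k})$ and $x_{s_k}(\boldsymbol{s})$ so that the telescoping in the inner sum matches the $\ln g(\cdot)$ factors on the payoff side; once that alignment is verified, the rest of the argument is immediate.
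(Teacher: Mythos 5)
Your proposal is correct and follows essentially the same route as the paper's own proof: drop the (nonnegative) mobility cost from the better-response inequality to isolate $H_{b_k}^{k}U_{b_k}(\boldsymbol{a}^{*})g(x_{b_k}(b_k,s_{-k})) > H_{s_k}^{k}U_{s_k}(\boldsymbol{a}^{*})g(x_{s_k}(\boldsymbol{s}))$, then telescope the congestion sums using $x_{b_k}(b_k,s_{-k})=x_{b_k}(\boldsymbol{s})+1$ and $x_{s_k}(\boldsymbol{s})=x_{s_k}(b_k,s_{-k})+1$ to show the change in $\Psi$ equals the logarithm of that same ratio. The only cosmetic difference is that you take logs of the payoff inequality first and the paper takes them last; the bookkeeping is identical.
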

\begin{proof}
First of all, the condition $V_{k}(b_{k},s_{-k},\boldsymbol{s})>V_{k}(\boldsymbol{s},\boldsymbol{s})$
implies that\[
H_{b_{k}}^{k}U_{b_{k}}(\boldsymbol{a}^{*})g(x_{b_{k}}(b_{k},s_{-k}))-\delta_{k}d_{b_{k}s_{k}}>H_{s_{k}}^{k}U_{s_{k}}(\boldsymbol{a}^{*})g(x_{s_{k}}(s_{k},s_{-k}))-\delta_{k}d_{s_{k}s_{k}}.\]
Since $d_{b_{k}s_{k}}>d_{s_{k}s_{k}}=0$, we then have\begin{equation}
H_{b_{k}}^{k}U_{b_{k}}(\boldsymbol{a}^{*})g(x_{b_{k}}(b_{k},s_{-k}))>H_{s_{k}}^{k}U_{s_{k}}(\boldsymbol{a}^{*})g(x_{s_{k}}(s_{k},s_{-k})).\label{eq:lll1}\end{equation}
Second, since $x_{b_{k}}(b_{k},s_{-k})=x_{b_{k}}(s_{k},s_{-k})+1$
and $x_{s_{k}}(s_{k},s_{-k})=x_{s_{k}}(b_{k},s_{-k})+1$, we have
that\begin{eqnarray}
 &  & \Psi(b_{k},s_{-k},\boldsymbol{s})-\Psi(s_{k},s_{-k},\boldsymbol{s})\nonumber \\
 & = & \ln U_{b_{k}}(\boldsymbol{a}^{*})-\ln U_{s_{k}}(\boldsymbol{a}^{*})+\sum_{i=0}^{x_{b_{k}}(b_{k},s_{-k})}\ln g(i)-\sum_{i=0}^{x_{b_{k}}(s_{k},s_{-k})}\ln g(i)\nonumber \\
 &  & +\sum_{i=0}^{x_{s_{k}}(b_{k},s_{-k})}\ln g(i)-\sum_{i=0}^{x_{s_{k}}(s_{k},s_{-k})}\ln g(i)+\ln H_{b_{k}}^{k}-\ln H_{s_{k}}^{k}\nonumber \\
 & = & \ln U_{b_{k}}(\boldsymbol{a}^{*})-\ln U_{s_{k}}(\boldsymbol{a}^{*})+\ln g(x_{b_{k}}(b_{k},s_{-k}))-\ln g(x_{s_{k}}(s_{k},s_{-k}))+\ln H_{b_{k}}^{k}-\ln H_{s_{k}}^{k}\nonumber \\
 & = & \ln\left(H_{b_{k}}^{k}U_{b_{k}}(\boldsymbol{a}^{*})g(x_{b_{k}}(b_{k},s_{-k}))\right)-\ln\left(H_{s_{k}}^{k}U_{s_{k}}(\boldsymbol{a}^{*})g(x_{s_{k}}(s_{k},s_{-k}))\right).\label{eq:llll2}\end{eqnarray}
From (\ref{eq:lll1}) and (\ref{eq:llll2}), we must have that $\Psi(b_{k},s_{-k},\boldsymbol{s})>\Psi(s_{k},s_{-k},\boldsymbol{s})$.
\end{proof}

Similarly to  the classical potential game, we can also define the
finite improvement property for the state-based game. Let $\boldsymbol{s}^{t}=(s_{1}^{t},...s_{K}^{t})$
be the state of the game in the $t$-th update, and $\boldsymbol{b}^{t}=(b_{1}^{t},...b_{K}^{t})$
be the strategy profile of all players in $t$-th update. According
to the state transition, we have $\boldsymbol{s}^{t+1}=F(\boldsymbol{b}^{t},\boldsymbol{s}^{t})$. A path
of the state-based game is a sequence $\rho=((\boldsymbol{b}^{0},\boldsymbol{s}^{0}),(\boldsymbol{b}^{1},\boldsymbol{s}^{1}),...)$
such that for every $t\geq1$ there exists a unique player, say player
$k_{t}$, such that $\boldsymbol{b}^{t}=(b_{k_{t}},s_{-k}^{t})$ for some strategy
$b_{k_{t}}\neq s_{k_{t}}^{t}$. $\rho=((\boldsymbol{b}^{0},\boldsymbol{s}^{0}),(\boldsymbol{b}^{1},\boldsymbol{s}^{1}),...)$
is an improvement path if for all $t\geq1$ we have $V_{k_{t}}(\boldsymbol{b}^{t},\boldsymbol{s}^{t})>V_{k_{t}}(\boldsymbol{s}^{t},\boldsymbol{s}^{t})$,
where $k_{t}$ is the unique deviator at the $t$-th update. From the properties
of the state-based potential function $\Psi(\boldsymbol{b},\boldsymbol{s})$, we first show
that every improvement path is finite.
\begin{thm}
\label{thm:The-distributed-AP}For the state-based distributed AP
association game, every improvement path is finite.\end{thm}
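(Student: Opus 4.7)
The plan is to leverage Lemma \ref{lemmaS} and the explicit form of the state-based potential $\Psi(\boldsymbol{b},\boldsymbol{s})$ given in (\ref{eq:s3}). The first key observation I would record is that $\Psi(\boldsymbol{b},\boldsymbol{s})$, despite carrying a state argument, actually depends only on the strategy profile $\boldsymbol{b}$: every summand in (\ref{eq:s3}) is a function of $b_k$'s or $x_n(\boldsymbol{b})$. This structural remark is what lets me connect single-step improvements to a global monotonicity claim.

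Next I would use the state-transition rule $F(\boldsymbol{b},\boldsymbol{s})=\boldsymbol{b}$ specific to this game, which gives $\boldsymbol{s}^{t+1}=\boldsymbol{b}^{t}$ along any path. Take an arbitrary improvement path $\rho=((\boldsymbol{b}^{0},\boldsymbol{s}^{0}),(\boldsymbol{b}^{1},\boldsymbol{s}^{1}),\ldots)$. At the $t$-th update, a unique deviator $k_t$ performs a better response, so
\[
V_{k_t}(\boldsymbol{b}^{t},\boldsymbol{s}^{t})>V_{k_t}(\boldsymbol{s}^{t},\boldsymbol{s}^{t}).
\]
By Lemma \ref{lemmaS} applied to state $\boldsymbol{s}^{t}$ and deviation $b_{k_t}^{t}$, this implies
\[
\Psi(\boldsymbol{b}^{t},\boldsymbol{s}^{t})>\Psi(\boldsymbol{s}^{t},\boldsymbol{s}^{t}).
\]
Since $\Psi$ does not depend on its state argument and $\boldsymbol{s}^{t+1}=\boldsymbol{b}^{t}$, I get $\Psi(\boldsymbol{s}^{t+1},\boldsymbol{s}^{t+1})=\Psi(\boldsymbol{b}^{t},\boldsymbol{s}^{t})>\Psi(\boldsymbol{s}^{t},\boldsymbol{s}^{t})$. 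Hence the sequence $\{\Psi(\boldsymbol{s}^{t},\boldsymbol{s}^{t})\}_{t\geq 0}$ is strictly increasing along the improvement path.

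To conclude, I would invoke finiteness: the joint strategy space is $\mathcal{N}^{K}$, which is finite, so $\Psi$ takes only finitely many distinct values over all reachable $(\boldsymbol{b},\boldsymbol{s})$ pairs. A strictly increasing sequence in a finite set must terminate, so the improvement path cannot be extended indefinitely. This proves that every improvement path is finite.

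The main obstacle, and the one point I would be most careful about, is the subtle interplay between the state and strategy arguments of $\Psi$: Lemma \ref{lemmaS} as stated compares $\Psi(\boldsymbol{b}^{t},\boldsymbol{s}^{t})$ and $\Psi(\boldsymbol{s}^{t},\boldsymbol{s}^{t})$ at a fixed state, whereas what the path produces is a sequence of pairs in which both the strategy and the state change. The clean way around this is to emphasize that $\Psi$ is really a function of $\boldsymbol{b}$ alone and that the transition rule $F(\boldsymbol{b},\boldsymbol{s})=\boldsymbol{b}$ aligns the new state with the previous strategy profile, so the two comparisons chain together into the required monotonicity. Everything else is a one-line appeal to finiteness of the strategy space.
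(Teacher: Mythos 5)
Your proof is correct and follows essentially the same route as the paper: apply Lemma \ref{lemmaS} at each update to show the potential $\Psi$ strictly increases along any improvement path, then conclude finiteness. If anything, your version is slightly tighter than the paper's, since you make explicit that $\Psi$ depends only on the strategy profile and that termination follows from $\Psi$ taking finitely many values over the finite strategy space $\mathcal{N}^{K}$, whereas the paper only asserts $\Psi(\boldsymbol{b},\boldsymbol{s})<\infty$, which by itself would not rule out an infinite strictly increasing sequence.
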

\begin{proof}
For any improvement path $\rho=((\boldsymbol{b}^{0},\boldsymbol{s}^{0}),(\boldsymbol{b}^{1},\boldsymbol{s}^{1}),...)$,
we have \begin{eqnarray*}
V_{k_{0}}(\boldsymbol{b}^{0},\boldsymbol{s}^{0}) & > & V_{k_{0}}(\boldsymbol{s}^{0},\boldsymbol{s}^{0}),\\
V_{k_{1}}(\boldsymbol{b}^{1},\boldsymbol{s}^{1}) & > & V_{k_{1}}(\boldsymbol{b}^{0},\boldsymbol{s}^{1}),\\
V_{k_{2}}(\boldsymbol{b}^{2},\boldsymbol{s}^{2}) & > & V_{k_{2}}(\boldsymbol{b}^{1},\boldsymbol{s}^{2}),\\
 & \vdots\end{eqnarray*}
where $\boldsymbol{s}^{1}=F(\boldsymbol{b}^{0},\boldsymbol{s}^{0})=\boldsymbol{b}^{0}$, $\boldsymbol{s}^{2}=F(\boldsymbol{b}^{1},\boldsymbol{s}^{1})=\boldsymbol{b}^{1}$, and so on. From Lemma \ref{lemmaS}, we know
that\begin{eqnarray*}
\Psi(\boldsymbol{s}^{0},\boldsymbol{s}^{0}) & < & \Psi(\boldsymbol{b}^{0},\boldsymbol{s}^{0}),\\
\Psi(\boldsymbol{b}^{0},\boldsymbol{s}^{0}) & < & \Psi(\boldsymbol{b}^{0},\boldsymbol{s}^{1}),\\
\Psi(\boldsymbol{b}^{0},\boldsymbol{s}^{1}) & < & \Psi(\boldsymbol{b}^{1},\boldsymbol{s}^{1}),\\
\Psi(\boldsymbol{b}^{1},\boldsymbol{s}^{1}) & < & \Psi(\boldsymbol{b}^{1},\boldsymbol{s}^{2}),\\
\Psi(\boldsymbol{b}^{1},\boldsymbol{s}^{2}) & < & \Psi(\boldsymbol{b}^{2},\boldsymbol{s}^{2}),\\
 & \vdots\end{eqnarray*}
which is increasing along the improvement path. Since $\Psi(\boldsymbol{b},\boldsymbol{s})<\infty$,
then the improvement path $\rho=((\boldsymbol{b}^{0},\boldsymbol{s}^{0}),(\boldsymbol{b}^{1},\boldsymbol{s}^{1}),...)$
must be finite.
\end{proof}
Similarly to the classical potential game, we further show that
any asynchronous better response update process also leads to a state-based
Nash equilibrium.
\begin{thm}
\label{thm:For-the-distributed}For the state-based distributed AP
association game, any asynchronous better response update process leads to
a state-based Nash equilibrium $(\boldsymbol{b}^{*},\boldsymbol{s}^{*})$ with $\boldsymbol{s}^{*}=F(\boldsymbol{b}^{*},\boldsymbol{s}^{*})$.\end{thm}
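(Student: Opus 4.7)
The plan is to combine Theorem \ref{thm:The-distributed-AP} (the finite improvement property) with the special structure of the state transition $F(\boldsymbol{b},\boldsymbol{s})=\boldsymbol{b}$ to show that wherever an asynchronous better response process halts, the terminal configuration automatically satisfies both clauses of the state-based Nash equilibrium definition.

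First, I would invoke Theorem \ref{thm:The-distributed-AP} to conclude that any asynchronous better response update process, regarded as an improvement path $\rho=((\boldsymbol{b}^{0},\boldsymbol{s}^{0}),(\boldsymbol{b}^{1},\boldsymbol{s}^{1}),\ldots)$, must terminate after a finite number of steps, say at step $T$. The meaning of termination is that, at the terminal configuration $(\boldsymbol{b}^T,\boldsymbol{s}^T)$, no player $k$ possesses a strictly improving deviation from the status quo; equivalently, for every $k\in\mathcal{K}$ and every $b_{k}$, we have $V_{k}(b_{k},s^T_{-k},\boldsymbol{s}^T)\leq V_{k}(\boldsymbol{s}^T,\boldsymbol{s}^T)$. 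I would then designate $\boldsymbol{b}^{*}=\boldsymbol{s}^{*}=\boldsymbol{s}^{T}$ as the candidate state-based Nash equilibrium; here I use the fact that after each update the state is overwritten by the just-played strategy profile via $\boldsymbol{s}^{t+1}=F(\boldsymbol{b}^{t},\boldsymbol{s}^{t})=\boldsymbol{b}^{t}$, so at termination the state and the strategy profile necessarily coincide.

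Next, I would verify the two clauses of Definition \ref{def:A-strategy-state}. For clause (1), since the state transition $F$ depends only on the current strategy, I get $F(\boldsymbol{b}^{*},\boldsymbol{s}^{*})=\boldsymbol{b}^{*}=\boldsymbol{s}^{*}$, so iterating yields $\triangle(\boldsymbol{b}^{*},\boldsymbol{s}^{*})=\{\boldsymbol{s}^{*}\}$, which is a singleton containing $\boldsymbol{s}^{*}$, hence $\boldsymbol{s}^{*}\in\triangle(\boldsymbol{b}^{*},\boldsymbol{s}^{*})$. For clause (2), I need to check $V_{k}(\boldsymbol{b}^{*},\boldsymbol{s})=\max_{b_{k}}V_{k}(b_{k},b^{*}_{-k},\boldsymbol{s})$ for every $\boldsymbol{s}\in\triangle(\boldsymbol{b}^{*},\boldsymbol{s}^{*})$; since the reachable set is the singleton $\{\boldsymbol{s}^{*}\}$, it suffices to check this at $\boldsymbol{s}=\boldsymbol{s}^{*}$, where $b^{*}_{-k}=s^{*}_{-k}$ and so the requirement reduces exactly to the termination condition established in the previous step.

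The main obstacle, and the place where the argument could easily be mis-stated, is the bookkeeping between the "strategy" and the "state" variables: an improvement path records them separately, but at termination they must be identified to match the format required by Definition \ref{def:A-strategy-state}. The structural identity $F(\boldsymbol{b},\boldsymbol{s})=\boldsymbol{b}$ is what forces $\boldsymbol{b}^{*}=\boldsymbol{s}^{*}$ and collapses the reachable set to a singleton; without this identity one would face a genuine stochastic-game fixed-point question. Once this identification is made, clause (1) is immediate and clause (2) is a direct restatement of "no profitable better response at $\boldsymbol{s}^{*}$," which is precisely why the process stopped.
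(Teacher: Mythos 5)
Your proposal is correct and follows essentially the same route as the paper's proof: invoke the finite improvement property of Theorem \ref{thm:The-distributed-AP} to obtain termination, observe that termination means no player has a profitable unilateral deviation, and use $F(\boldsymbol{b},\boldsymbol{s})=\boldsymbol{b}$ to identify $\boldsymbol{b}^{*}=\boldsymbol{s}^{*}$ and verify both clauses of Definition \ref{def:A-strategy-state}. Your explicit remark that $\triangle(\boldsymbol{b}^{*},\boldsymbol{s}^{*})$ collapses to the singleton $\{\boldsymbol{s}^{*}\}$ is a slightly cleaner way of justifying the quantification over all reachable states than the paper's wording, but it is the same argument.
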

\begin{proof}
Suppose that an asynchronous better response update process $\rho=((\boldsymbol{b}^{0},\boldsymbol{s}^{0}),(\boldsymbol{b}^{1},\boldsymbol{s}^{1}),...)$
terminates at the point $(\boldsymbol{b}^{*},\boldsymbol{s}^{*})$. In this case, we must have
$\Psi(b_{k}^{*},b_{-k}^{*},\boldsymbol{s}^{*})\geq\max_{b_{k}}\Psi(b_{k},b_{-k}^{*},\boldsymbol{s}^{*})$, otherwise the improvement path $\rho$ does not terminate at point
$(\boldsymbol{b}^{*},\boldsymbol{s}^{*})$. At point
$(\boldsymbol{b}^{*},\boldsymbol{s}^{*})$, we must also have that $V_{k}(b_{k}^{*},b_{-k}^{*},\boldsymbol{s}^{*})\geq\max_{b_{k}}V_{k}(b_{k},b_{-k}^{*},\boldsymbol{s}^{*})$ and $\boldsymbol{s}^{*}=\boldsymbol{b}^{*}=F(\boldsymbol{b}^{*},\boldsymbol{s}^{*})$, otherwise the
potential function can be improved and thus the improvement path does not terminate here. Thus, we have $\boldsymbol{s}^{*}\in\triangle(\boldsymbol{b}^{*},\boldsymbol{s}^{*})$
and $V_{k}(b_{n}^{*},b_{-n}^{*},\boldsymbol{s})\geq\max_{b_{n}}V_{k}(b_{n},b_{-n}^{*},\boldsymbol{s}),\forall \boldsymbol{s}\in\triangle(\boldsymbol{b}^{*},\boldsymbol{s}^{*})$,
which satisfies the conditions in Definition \ref{def:A-strategy-state}.
\end{proof}

Since $\boldsymbol{s}^{*}=\boldsymbol{b}^{*}=F(\boldsymbol{b}^{*},\boldsymbol{s}^{*})$, Theorem \ref{thm:For-the-distributed} implies that the asynchronous better response update process leads to the state-based Nash equilibrium ($\boldsymbol{b}^{*},\boldsymbol{b}^{*}$), i.e., the equilibrium that all users are satisfied with the current AP associations $\boldsymbol{b}^{*}$ and have no incentive to move anymore.

\subsection{\label{sec:Distributed-AP-Association}Distributed AP Association
Algorithm}

We next design a distributed AP association algorithm based on the
finite improvement property shown in Theorem \ref{thm:The-distributed-AP},
which allows secondary users to select their associated APs in a distributed
manner and achieve mutually acceptable AP associations, i.e., a state-based Nash
equilibrium.

\begin{algorithm}[tt]
\begin{algorithmic}[1]
\State \textbf{initialization:}
\State \hspace{0.4cm} \textbf{set} the mean $\eta$ for strategy update countdown.
\State \textbf{end initialization\newline}

\Loop{ for each secondary user $k\in\mathcal{K}$ in parallel:}
        \State \textbf{generate}  a timer value that follows the exponential distribution
with the mean $\eta$.
        \State \textbf{count down} until the timer expires.
        \If{ the timer expires}
            \State \textbf{acquire} the information of channel throughput $\{U_{n}(\boldsymbol{a}^{*})\}$, the geo-location of APs, and user distribution $\{x_{n}\}_{n=1}^{N}$.
            \State \textbf{update} the strategy $b_{k}^{*}$ according to the best response in (\ref{eq:ss0}).
        \EndIf
\EndLoop
\end{algorithmic}
\caption{\label{alg:-Distributed-AP} Distributed AP Association Algorithm}
\end{algorithm}

The key idea is to let secondary users asynchronously improve their
AP selections. Unlike the non-cooperative AP channel selection update
with the fixed order enforced by the geo-location database, the distributed AP
association algorithm can not be deterministic. This is because that, as secondary users dynamically enter and leave the network, a deterministic distributed AP association
algorithm according to the fixed strategy update order is not robust.
Hence we will design a randomized algorithm by letting each
secondary user countdown according to a timer value that follows the
exponential distribution with a mean equal to $\eta$. Since the exponential distribution has support over $(0,\infty)$ and its probability density function is continuous,
the probability that more than one users generate the same timer value
and update their strategies simultaneously equals zero.\footnote{The timer in practice is always finite, and the collision probability is not exactly zero. However, as long as the collision probability is very small, the following analysis is a very good approximation of the reality.} When a user
$k$ activates its strategy update at time $t$, the user can computes
its best response strategy as \begin{eqnarray}
b_{k}^{*} & = & \arg\max_{b_{k}}V_{k}(b_{k},b_{-k}^{t},\boldsymbol{s}^{t}) = \arg\max_{b_{k}}H_{b_{k}}^{k}U_{b_{k}}(\boldsymbol{a}^{*})g(x_{b_{k}}(\boldsymbol{b}^{t}))-\delta_{k}d_{b_{k}s_{k}},\label{eq:ss0}\end{eqnarray}
which requires the information of user distribution $(x_{1}(\boldsymbol{b}^{t}),...x_{N}(\boldsymbol{b}^{t}))$
at time $t$, the throughput $U_{n}(\boldsymbol{a}^{*})$, and geo-locations of
all the APs.  We then consider the computational complexity of the algorithm. For each iteration of each user, Lines $4$ to $7$ only involve random value generation and subduction operation for
count-down, and hence have a complexity of $\mathcal{O}(1)$. Line $8$ involves information inquiry from $N$ APs and hence has a complexity of $\mathcal{O}(N)$. Line $9$ computes  the best response strategy, which can be achieved by sorting at most $N$ values and typically has a complexity of $\mathcal{O}(N\log N)$. Suppose that it takes $C$ iterations for the algorithm
to converge. Then total computational complexity of $K$ users is $\mathcal{O}(CKN\log N)$. Similarly, we can show that the space complexity is $\mathcal{O}(KN)$.

To facilitate the best response update, we propose to
setup a social database (accessible by all secondary users), wherein
each AP reports its channel throughput $U_{n}(\boldsymbol{a}^{*})$ and geo-location,
and each secondary user $k^{'}\in\mathcal{K}$ posts and shares its
AP association $b_{k^{'}}^{*}$ with other users in the manner like
Twitter once it updates. Based on the information from the social
database, a secondary user $k$ can first figure out the user distribution
$(x_{1}(\boldsymbol{b}^{t}),...x_{N}(\boldsymbol{b}^{t}))$ as \begin{equation}
x_{n}(\boldsymbol{b}^{t})=\sum_{k^{'}=1}^{K}I_{\{b_{k'}^{t}=n\}},\forall n\in\mathcal{N},\label{eq:ss1}\end{equation}
where $I_{\{b_{k'}^{t}=n\}}=1$ if user $k'$ associates with AP $n$,
and $I_{\{b_{k'}^{t}=n\}}=0$ otherwise. Based on the user distribution,
the secondary user $k$ can then compute the corresponding best response
strategy according to (\ref{eq:ss0}).

The success of social database requires that each user is willing
to share the information of its AP association. When this
is not feasible, each AP $n$ can estimate its associated user population $x_{n}$
locally. Let $\bar{g}(x_{n})$ denote the probability that no user
among $x_{n}$  associated with the same AP grabs the channel in a time slot
$\tau$. This can be computed as $\bar{g}(x_{n})=1-x_{n}g(x_{n})$,
where $g(x_{n})$ is given in (\ref{eq:gg}). In a time slot $\tau$, AP $n$
can observe the information $I_{a_{n}^{*}}(\tau)\in\{1,0\}$, i.e., whether the channel $a_{n}^{*}$ is used by any users or not. Then
over a long period that consists of $L$ time slots, AP $n$ can observe
the outcome $\{I_{a_{n}^{*}}(\tau)\}_{\tau=1}^{L}$ and estimate $\bar{g}(x_{n})=\frac{\sum_{\tau=1}^{L}I_{a_{n}^{*}}(\tau)}{L}$
by the sample-average. Since $I_{a_{n}^{*}}(\tau)$ is independently
and identically distributed according to the probability $\bar{g}(x_{n})$,
according to the law of large numbers, the estimation will be accurate when the observation period length $L$ is large enough.
This is feasible in practices since user's mobility decision is often
carried out at a large time scale (say every few minutes), compared
with the time scale of a time slot (say $50$ microseconds in the
standard 802.11 system). Then AP $n$ can obtain the number of its
associated users $x_{n}$ by solving that $x_{n}=\bar{g}^{-1}\left(\frac{\sum_{\tau=1}^{L}I_{a_{n}^{*}}(\tau)}{L}\right)$, and report it in the social database.

We summarize the distributed
AP association algorithm in Algorithm \ref{alg:-Distributed-AP}.
According to Theorem \ref{thm:For-the-distributed}, such asynchronous
best response update process must reach a state-based Nash equilibrium.
Numerical results show that the algorithm is also robust to the dynamics
of secondary users' leaving and entering the system.

\section{\label{sec:Simulation-Results}Simulation Results}

In this part, we investigate the proposed AP channel selection
and AP association algorithms by simulations.

\subsection{\label{sub:Distributed-AP-Channel}Cooperative AP Channel Selection}

We first implement the cooperative AP channel selection algorithm
in Section \ref{sec:Distributed-AP-Channel1}. We consider a
white-space wireless system consisting of $M=4$ channels and $N=8$
APs, which are scattered across a square area of a length of $500$
m (see Figure \ref{fig:A-square-area}). The bandwidth of each channel
is $6$ MHz, the noise power is $\omega_{m}^{n}=-100$ dBm, and the
path loss factor $\theta=4$. Each AP $n$ operates with a specific
transmission power $P_{n}$ and has a different set of vacant channels
by consulting the geo-location database (please refer to Figure \ref{fig:A-square-area}
for the details of these parameters). We set that the distance $d_{n}$
between AP $n$ and its associated boundary secondary user is $20$ m.

We implement the cooperative channel selection algorithm with the parameter $\gamma=0.2$, $0.5,$ and $0.85$, respectively\footnote{Note that the system throughput $\sum_{i=1}^{N}U_{i}(\boldsymbol{a})$ is a large number in the Mbps unit and a small $\gamma$ is adopted in the simulation. Otherwise, $\exp(\gamma\sum_{i=1}^{N}U_{i}(\boldsymbol{a}))$ would exceed the range of the largest predefined real number on a personal computer. However, if we measure the system throughput in the Gbps unit, the parameter $\gamma$ is large and becomes $0.2*1024$, $0.5*1024,$ and $0.85*1024$, respectively.}. We show the dynamics of the time average throughputs of all the APs in Figure \ref{fig:Dynamics-of-distributed1} when $\gamma=0.85$. It demonstrates the convergence of the cooperative channel selection algorithm. From Figure \ref{fig:Dynamics-of-potential1}, we see that the performance of the algorithm improves as the $\gamma$ increases, and the convergence time also increases accordingly. When $\gamma=0.85$, the performance loss of the cooperative channel selection algorithm is less than $1\%$, compared with the centralized optimal solution, i.e., $\max_{\boldsymbol{a}\in\Theta}\sum_{n=1}^{N}U_{n}(\boldsymbol{a})$. Moreover, the algorithm achieves more than $18\%$ performance gain over the random channel selection scheme wherein the APs choose channels purely randomly.

\begin{figure}[tt]
\centering
\includegraphics[scale=0.75]{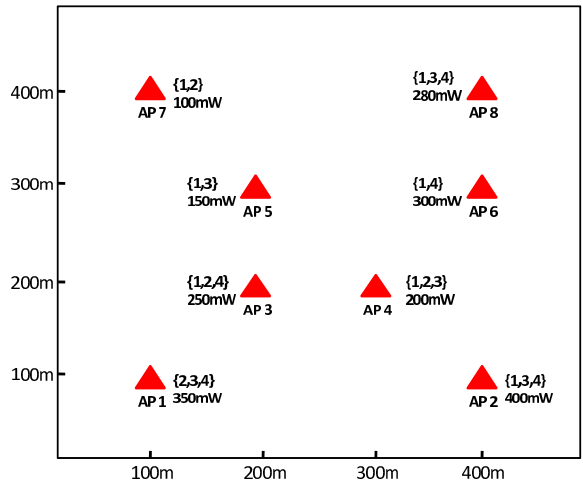}
\caption{\label{fig:A-square-area}A square area of a length of $500$m with
$8$ scattered APs. Each AP has a set of vacant channels, and operates
with a specific transmission power. For example, the available channels
and transmission power of AP $1$ are $\{2,3,4\}$ and $350$ mW,
respectively. }
\end{figure}

\begin{figure}[tt]
\centering
\includegraphics[scale=0.5]{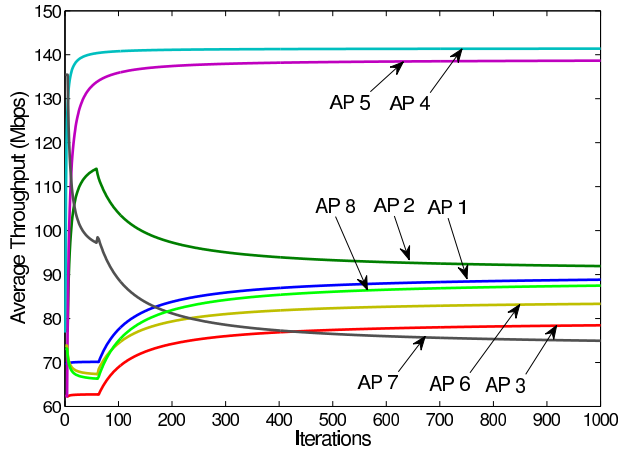}
\caption{\label{fig:Dynamics-of-distributed1}Dynamics of APs'  time average throughputs in cooperative channel selection with $\gamma=0.85$}
\end{figure}

\begin{figure}[tt]
\centering
\includegraphics[scale=0.5]{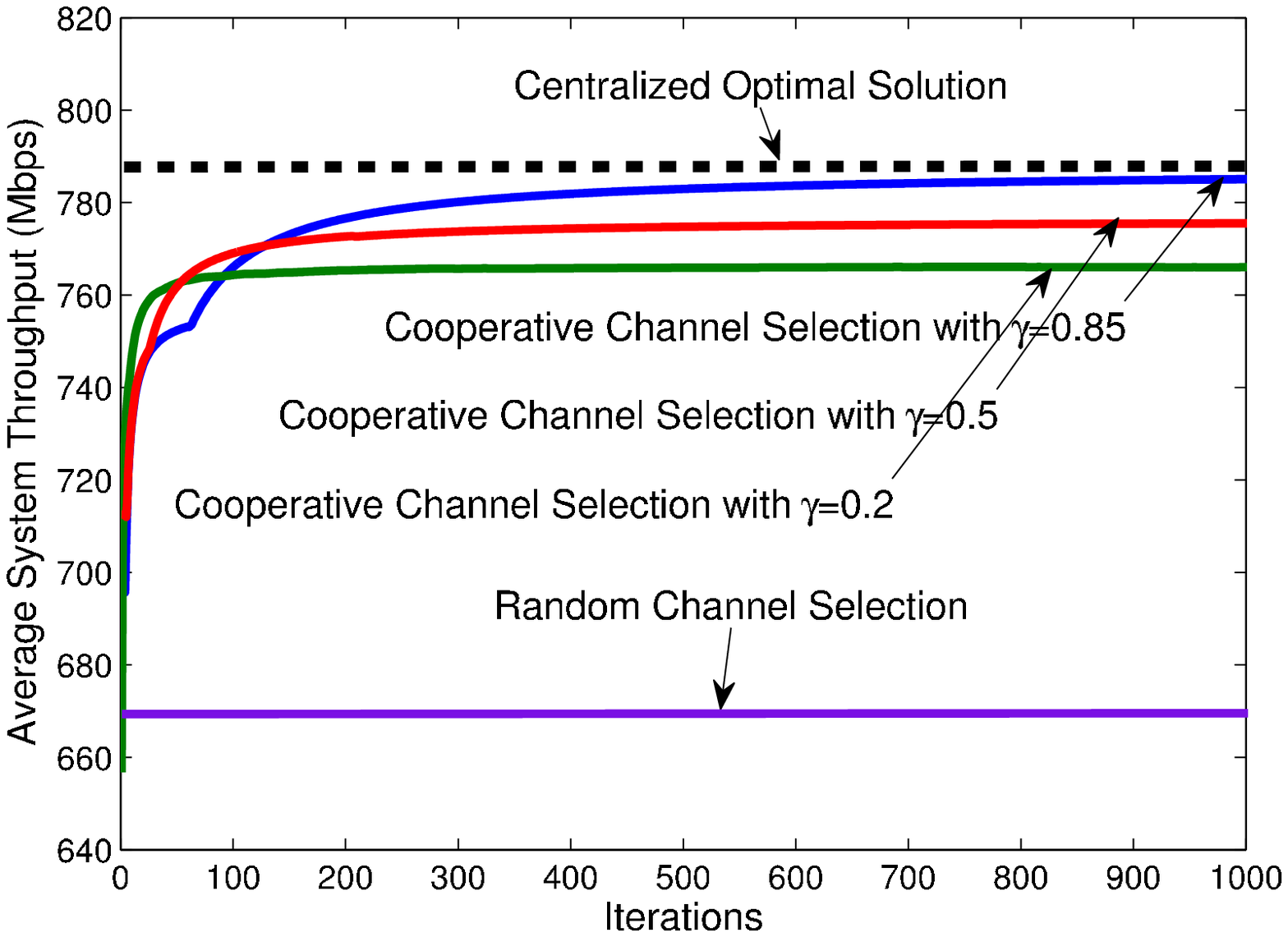}
\caption{\label{fig:Dynamics-of-potential1}Dynamics of time average system throughput}
\end{figure}

\subsection{\label{sub:Distributed-AP-Channel}Non-Cooperative AP Channel Selection}

We then implement the non-cooperative channel selection algorithm in Section \ref{sec:Distributed-AP-Channel}. We show the dynamics of the throughputs of all the APs
in Figure \ref{fig:Dynamics-of-distributed}. We see that the algorithm
converges to an equilibrium $\boldsymbol{a}^{*}$ in less $20$  iterations.
To verify that the equilibrium is a Nash equilibrium, we show the dynamics of the potential function $\Phi$ in Figure \ref{fig:Dynamics-of-potential}.
We see that the algorithm can lead the potential function to a maximum
point, which is a Nash equilibrium according to the property of potential
game. At the equilibrium $\boldsymbol{a}^{*}$, $8$ APs achieve the throughputs
$U_{n}(\boldsymbol{a}^{*})$ of $\{101.4, 100.1, 82.6, 97.6, 83.2, 98.7, 85.6, 84.5\}$
Mbps, respectively, and no AP has the incentive to deviate its channel
selection unilaterally. Compared with cooperative AP channel selection algorithm, the performance loss of the non-cooperative channel selection algorithm is less than $7\%$. Such a performance loss is due to the selfishness of APs in the non-cooperative environment. However, the convergence time of non-cooperative AP channel selection algorithm is much shorter. This is because that in order to achieve the system optimal solution, the cooperative algorithm needs more time to randomly explore the whole set of feasible channel selections.  While the non-cooperative channel selection algorithm achieves the Nash equilibrium by focusing on the subset of channel selections satisfying the finite improvement property.

We then further implement simulations with $N=10,20,...,50$ APs being randomly scattered over the square area in Figure \ref{fig:A-square-area}, respectively. The number of TV channels $M=50$ and $25$ channels out of these $50$ channels will be randomly chosen as the set of vacant channels $\mathcal{M}_{n}$ for each AP $n$. We implement both non-cooperative and cooperative AP channel selection algorithms. The results are shown in Figure \ref{fig:Comparison}. We see that when the number of APs is small (e.g., $N\leq 20$), the non-cooperative channel selection achieves the same performance as the cooperative case. This is due to the abundance of the spectrum resources. We also observe that the performance of the non-cooperative channel selection algorithm is less  than $8\%$ in all cases. This demonstrates the efficiency of the non-cooperative channel selection.

\begin{figure}[tt]
\centering
\includegraphics[scale=0.5]{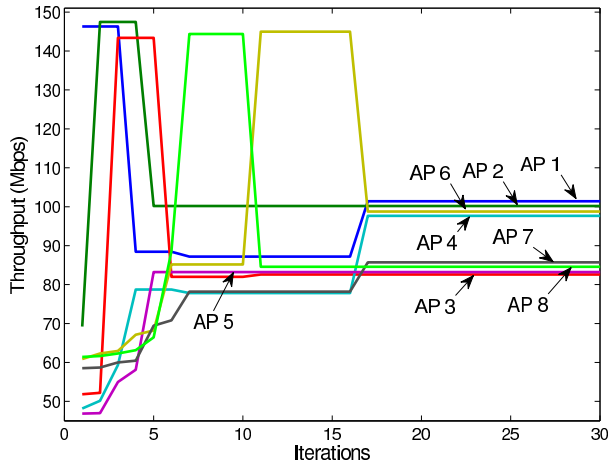}
\caption{\label{fig:Dynamics-of-distributed}Dynamics of non-cooperative AP channel
selection}
\end{figure}

\begin{figure}[tt]
\centering
\includegraphics[scale=0.5]{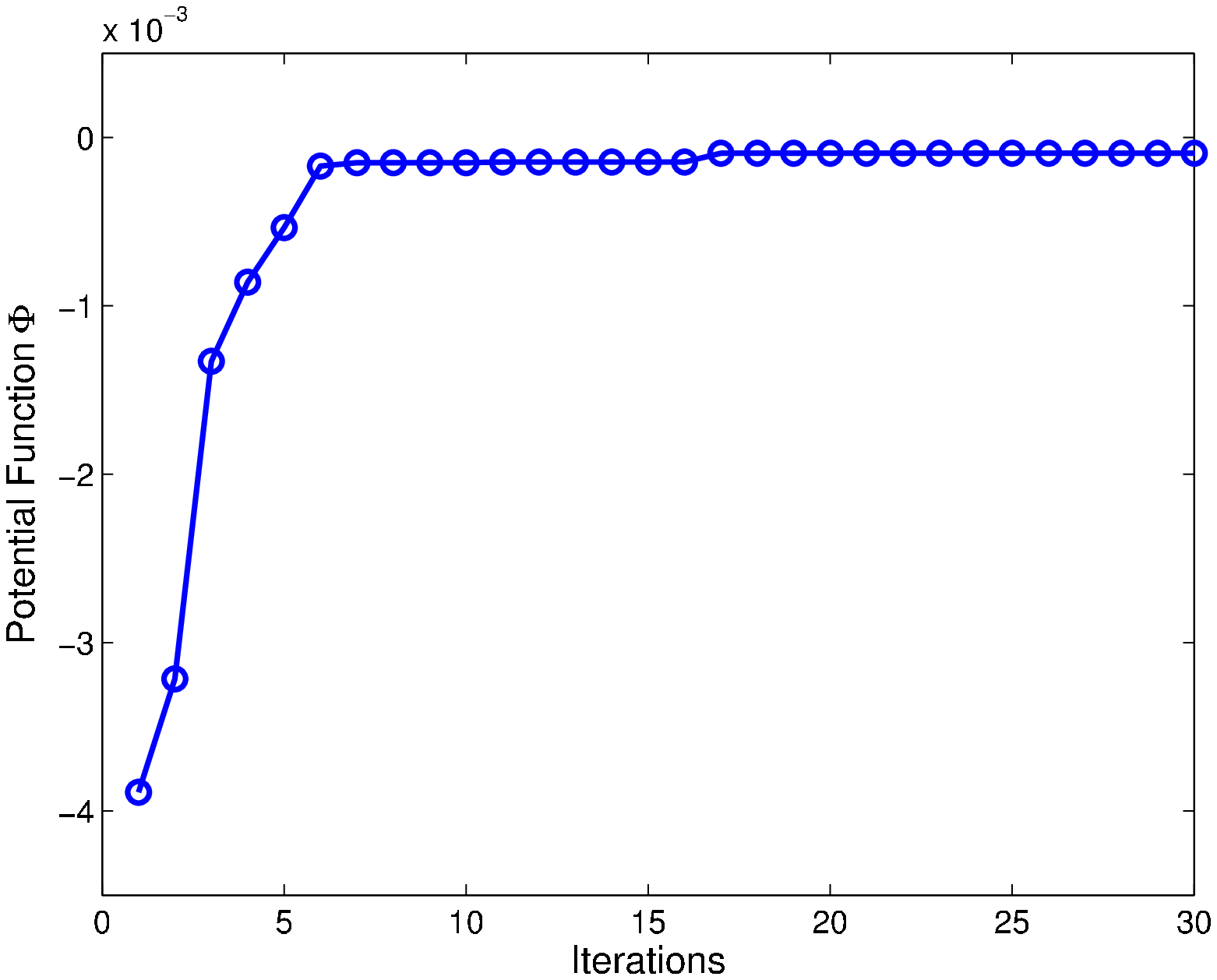}
\caption{\label{fig:Dynamics-of-potential}Dynamics of potential function value
$\Phi$ corresponding to the dynamics in Figure \ref{fig:Dynamics-of-distributed}}
\end{figure}

\begin{figure}[tt]
\centering
\includegraphics[scale=0.5]{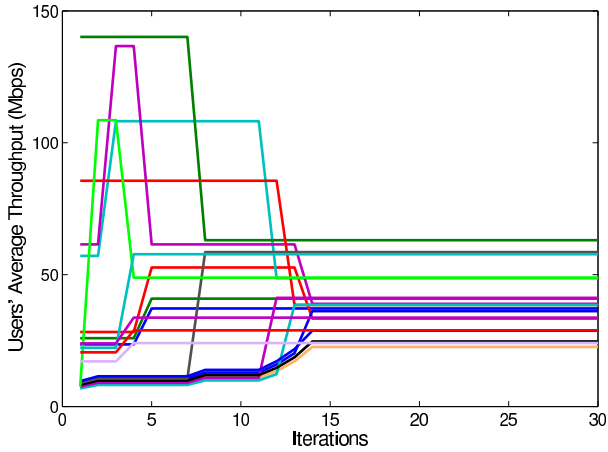}
\caption{\label{fig:Dynamics-of-distributed-1}Dynamics of distributed AP association}
\end{figure}

\begin{figure}[tt]
\centering
\includegraphics[scale=0.5]{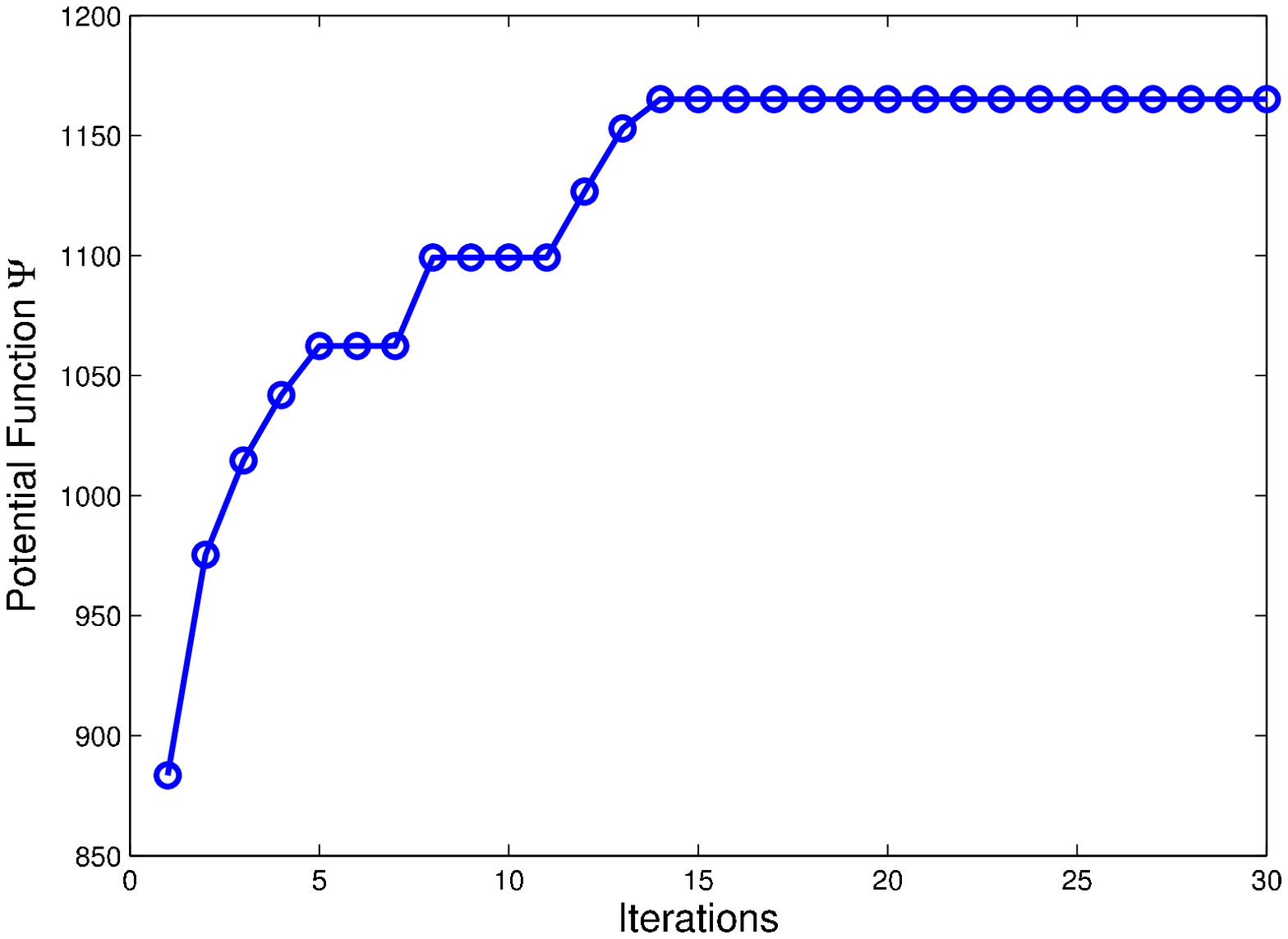}
\caption{\label{fig:Dynamics-of-potential2}Dynamics of potential function
value $\Psi$  corresponding to the dynamics in Figure \ref{fig:Dynamics-of-distributed-1}}
\end{figure}

\begin{figure}[tt]
\centering
\includegraphics[scale=0.6]{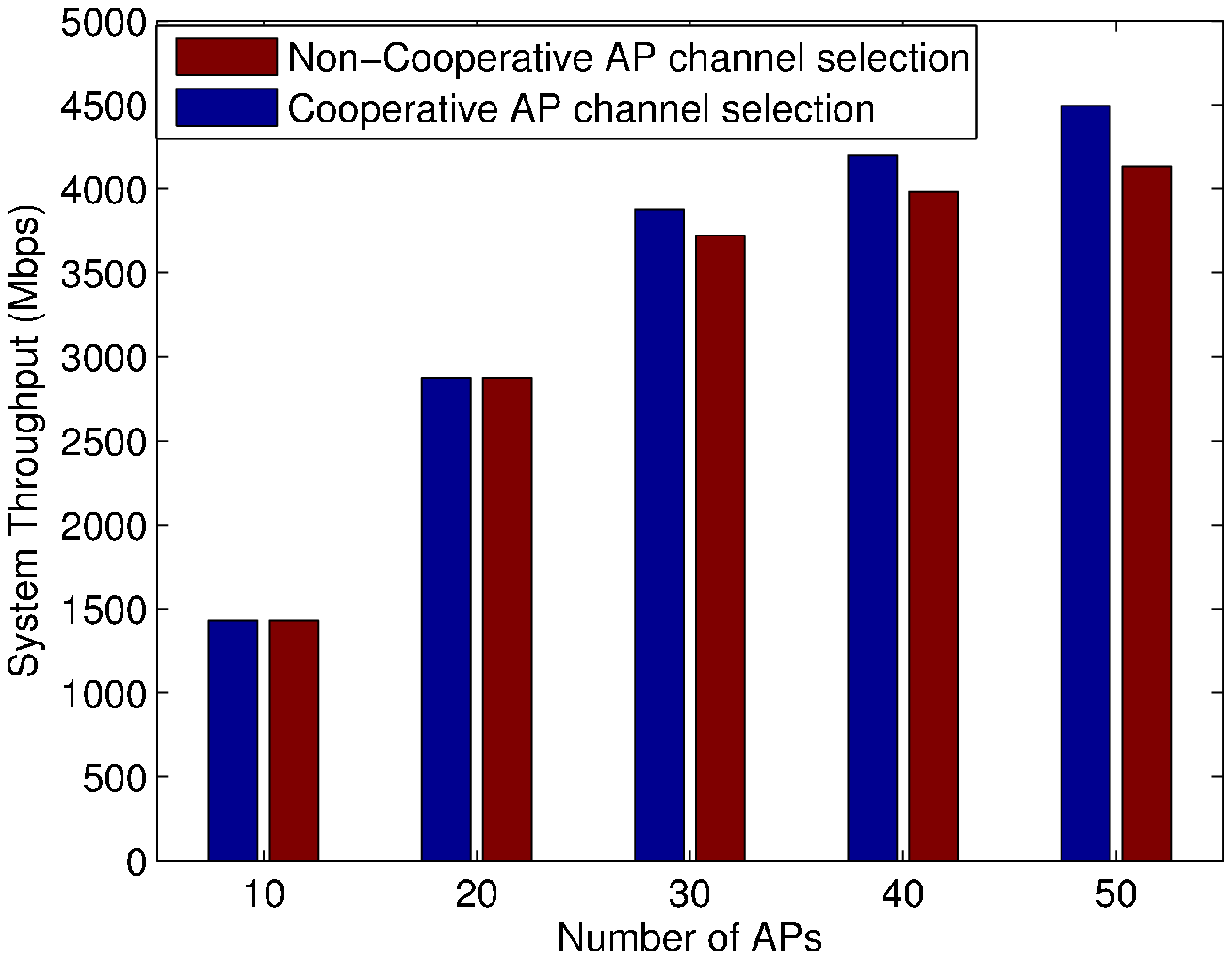}
\caption{\label{fig:Comparison}System throughput by cooperative and non-cooperative AP channel selection with the number of APs $N=10,20,...,50$, respectively.}
\end{figure}

\subsection{Distributed AP Association} We next implement the distributed AP association algorithm in Section
\ref{sec:Distributed-AP-Association-1}. We consider $K=20$ mobile
secondary users who can move around and try to find a proper AP
to associate with. Within an AP $n$, the worse-case throughput $U_{n}(\boldsymbol{a}^{*})$
of AP $n$ is computed according to the
Nash equilibrium $\boldsymbol{a}^{*}$ in Section \ref{sub:Distributed-AP-Channel}.
For the channel contention by multiple secondary users, we set the
number of backoff mini-slots $\lambda_{\max}=10$.

We first show in Figure
\ref{fig:Dynamics-of-distributed-1} the dynamics of the distributed AP association algorithm
with the random initial APs selections, users' transmission gains $H_{n}^{k}$ being randomly selected from the set $\{1.0,1.1,1.2,1.3,1.4,1.5\}$, and the mobility cost factor $\delta_{k}=0.06$ Mbps/m. We see that the algorithm converges
to an equilibrium $(\boldsymbol{b}^{*},\boldsymbol{b}^{*})$ in less $30$  iterations.
We also show the the dynamics of the state-based potential function
$\Psi$ in Figure \ref{fig:Dynamics-of-potential2}. We see that the
equilibrium $(\boldsymbol{b}^{*},\boldsymbol{b}^{*})$ is a state-based Nash equilibrium, since
the algorithm leads the state-based potential function to a maximum
point.

\begin{figure*}[tt]
\centering
\includegraphics[scale=0.75]{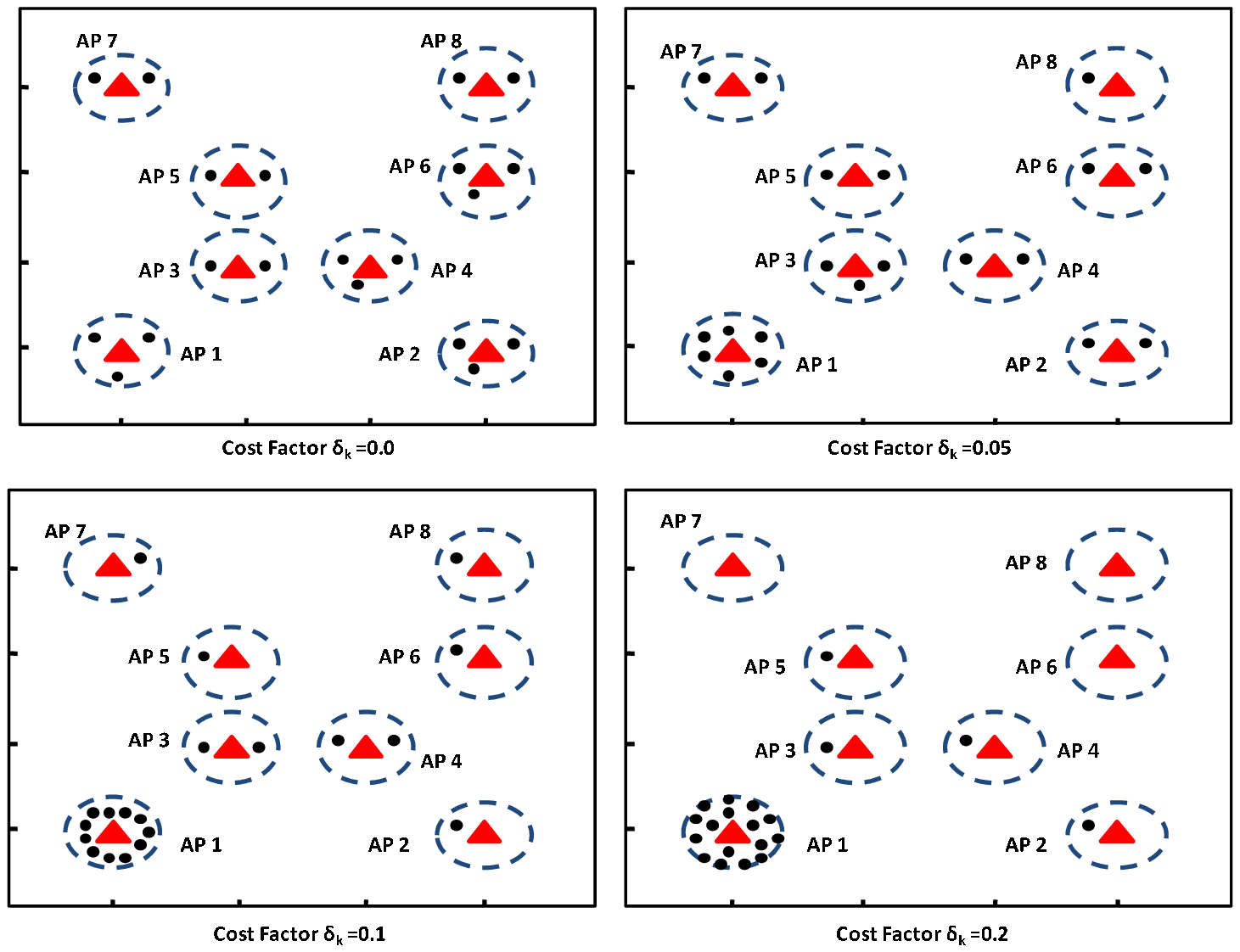}
\caption{\label{fig:The-distribution-of}The equilibrium secondary user distribution
with different cost factors. The black dots represent the secondary
users, and the dash cycles represent the transmission ranges of APs.}
\end{figure*}

To investigate the impact of the cost factor $\delta_{k}$, we assume that
all the users are initially associated with AP $1$ with the same transmission gains $H_{n}^{k}=1$ and they change AP associations according to
the distributed AP association algorithm with four different settings
in Figure \ref{fig:The-distribution-of}.  In each setting, all users have the same cost factor $\delta_{k}$. As the mobility
cost increases, we see that less secondary users are willing to move away from their initial APs.
When the cost $\delta_{k}=0$, the secondary users are scattered across
all APs since there is no cost due to mobility. When $\delta_{k}=0.2$,
only a small fraction of secondary users move away from the initial AP $1$ to the  APs closeby, due to the high cost of mobility. In Figure \ref{fig711}, we further implement the algorithm with a mixture of two types of secondary users: high and low mobility cost factors.  We see that users of low mobility cost will spread out to achieve better data rates, while most users of high mobility cost choose to stay in AP $1$ and suffer from severe congestion.

We next investigate the robustness of the distributed AP association
algorithm. We consider $K=30$ mobile secondary users with the cost
factor $\delta_{k}$ randomly generated from a uniform distribution
in $(0,0.2)$. At iteration $t=200$ and $400$, we let
$10$ users leave the system and $15$ new users enter the system,
respectively. The results in Figures \ref{fig:Dynamics-of-distributed2}
and \ref{fig:Dynamics-of-potential3} show that the algorithm can
quickly converge to a state-based Nash equilibrium after the perturbations
occur. This verifies that the distributed AP association algorithm
is robust to the dynamics of secondary users' leaving and entering the system.

\begin{figure}[tt]
\centering
\includegraphics[scale=0.65]{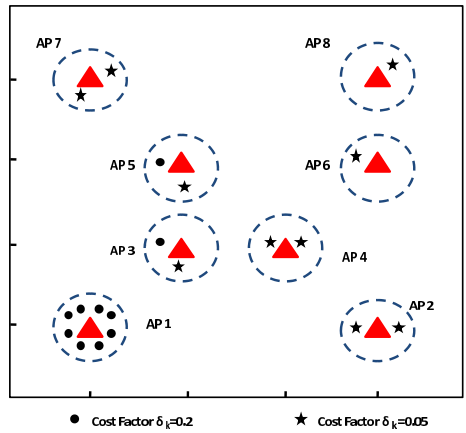}
\caption{\label{fig711}The equilibrium secondary user distribution
with a mixture of two types of secondary users. The black dots represent the secondary
users with a high cost factor, and the black stars represent the secondary
users with a low cost factor.}
\end{figure}

\begin{figure}[tt]
\centering
\includegraphics[scale=0.5]{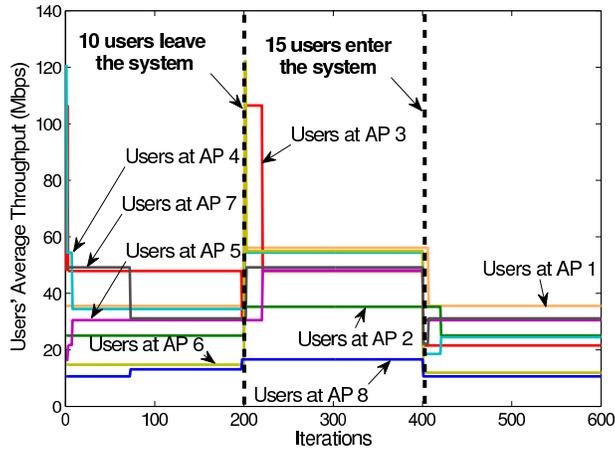}
\caption{\label{fig:Dynamics-of-distributed2}Dynamics of distributed AP association
with perturbations. At iteration $t=200$ and $400$, $10$ users
leave the system and $15$ new users enter the system, respectively.}
\end{figure}

\begin{figure}[tt]
\centering
\includegraphics[scale=0.45]{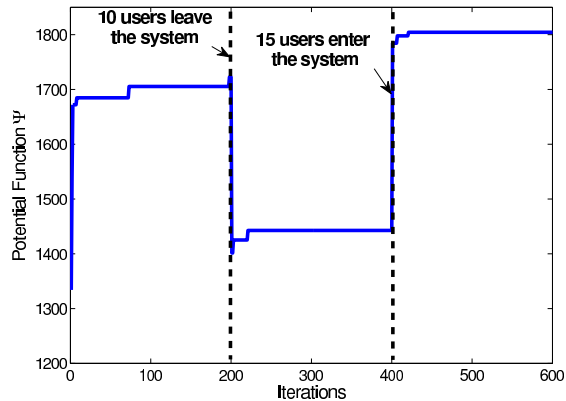}
\caption{\label{fig:Dynamics-of-potential3}Dynamics of potential function
value $\Psi$ with perturbations. At iteration $t=200$ and $400$,
$10$ users leave the system and $15$ new users enter the system,
respectively.}
\end{figure}

\section{\label{sec:Related-Work}Related Work}

Most research efforts in database-assisted white-space systems are
devoted to the design of geo-location service. Gurney \emph{et al.} in \cite{db1}
calculated the spectrum availability based on the transmission power
of the white-space devices. Karimi in \cite{db2} presented a method
to derive location-specific maximum permitted emission levels for
white space devices. Murty \emph{et al.} in \cite{db4} proposed a framework
to determine the vacant spectrum by using propagation model and
terrain data. Nekovee in \cite{db3} studied the white-space availability
and frequency composition in UK.

For the white-space networking system design, many existing works
focus on the experimental testbed implementation. Bahl \emph{et al.} in \cite{db5}
designed a single white-space AP system. Murty \emph{et al.} in \cite{db4}
addressed the client bootstrapping and mobility handling issues in
white-space AP networks. Feng \emph{et al.} in \cite{db6} considered the
OFDM-based AP white-space network system design. Deb \emph{et al.} in \cite{db7}
presented a centralized white-space spectrum allocation algorithm.
In this paper, we propose a theoretic framework based on game theory
for distributed resource allocation in white-space AP networks.

The game theory has been used to study wireless resource allocation
problems in non-white-space infrastructure-based networks.  Song \emph{et al.} in \cite{ap1} modeled the distributed channel allocation in mesh
networks as a non-cooperative game, where each cell tries to minimize
the interference received from other cells. Southwell \emph{et al.} in \cite{Southwell2012Spectrum} modeled the distributed channel selection problem with switching cost as a network congestion game. Chen and Huang in \cite{Chen2012Spatial} proposed a spatial spectrum access game framework for distributed spectrum sharing with spatial reuse. Wang \emph{et al.} in \cite{wang2010spectrum} proposed an auction approach for incentive-compatible spectrum resource allocation. Most previous works studied the competitive channel selection based on the protocol interference model where two users can interfere with each other if they are linked by an interference edge on the interference graph. In this paper, we explore the competitive channel selections based on the physical interference model, which is not well studied in the literature. The most relevant work is \cite{kauffmann2007measurement}, where Kauffmann \emph{et al.} considered to minimize the total interferences received by all the APs by designating each AP a specific  utility function to be optimized locally. In our paper, we consider the case that each AP is fully rational and tries to maximize its own throughput.

For the AP association problem, Gaji\'{c} \emph{et al.} in \cite{ap2}
and Duan et. al in \cite{ap4} studied the pricing mechanisms to achieve
efficient wireless service provider association solutions. Bejerano \emph{et al.} in \cite{bejerano2004fairness} address the load imbalance problem through the association control.  Hong \emph{et al.} in \cite{ap3} investigated distributed AP association game with power control, by assuming that the chosen channels among APs are
non-overlapping. These previous results focus on the case that users are stationary, and can associate with any AP. When users are mobile, Mittal \emph{et al.} in \cite{ap5} studied the distributed access
point selection game by assuming that users are homogeneous
with the same cost of mobility. Here we propose a state-based game
framework to formulate the more general case that users have heterogeneous
cost of mobility.

\section{\label{sec:Conclusion}Conclusion}

In this paper, we consider the database-assisted white-space AP network
design. We address the cooperative and non-cooperative channel selection problems among the APs and
the distributed AP association problem of the secondary users. We propose the cooperative and non-cooperative AP channel selection
algorithms and a distributed AP association algorithm, all of which that converge
to the corresponding equilibrium globally. Numerical results show that the proposed algorithms are efficient, and are also robust to
the perturbation by secondary users' dynamical leaving and entering
the system.

For the future work, we are going to generalize the results to the mixture
case that consists of both cooperative and non-cooperative APs. Multiple APs that belong to one network operator are cooperative with each other, but they may not cooperate with other APs that belong to a different network operator. It will be interesting to study the existence of Nash equilibrium and design distributed algorithms to achieve the equilibrium.

Although the distributed AP association algorithm can achieve the state-based Nash equilibrium wherein all users are satisfied given their mobility cost factors, the loads among different APs can be quite imbalanced when the mobility cost is high as demonstrated in the numerical results. Thus, how to design an incentive compatible mechanism such as pricing to achieve load balance among the APs with mobile secondary
users will be very interesting and challenging.

\bibliographystyle{ieeetran}
\bibliography{Database}

\end{document}